\crefname{hypothesis}{Hypothesis}{Hypotheses}
\title{Parallel square-root statistical linear regression for inference in nonlinear state space models\thanks{Submitted to the editors \funding{This work was funded by Academy of Finland (project 321900)}}}
\author{Fatemeh Yaghoobi, Adrien Corenflos, Sakira Hassan, Simo Särkkä\footnotemark[0] \thanks{F. Yaghoobi, A. Corenflos, S. Hassan, and S. Särkkä are with the Department of Electrical Engineering
and Automation, Aalto University, 02150 Espoo, Finland (email:
fatemeh.yaghoobi@aalto.fi)}}
\begin{document}

\maketitle

\begin{abstract}
    In this article, we first derive parallel square-root methods for state estimation in linear state-space models. We then extend the formulations to general nonlinear, non-Gaussian state-space models using statistical linear regression and iterated statistical posterior linearization paradigms. We finally leverage the fixed-point structure of our methods to derive parallel square-root likelihood-based parameter estimation methods. We demonstrate the practical performance of the methods by comparing the parallel and the sequential approaches on a set of numerical experiments.
\end{abstract}

\begin{keywords}
    iterated Kalman smoothing, parameter estimation, robust inference, parallel scan, sigma-point and extended linearization
\end{keywords}

\begin{MSCcodes}
    68W10, 65D32, 62F15, 62F10
\end{MSCcodes}

\section{Introduction}
    Inference in linear and nonlinear state-space models (SSMs) is an active research topic with applications in many fields including target tracking, control engineering, and biomedicine~\cite{bar2004estimation, sarkka2013bayesian}. In this article, we are primarily interested in state and parameter estimation problems in state-space models of the form
    \begin{equation}\label{eq:ss-model}
        \begin{split}
                x_k \mid x_{k-1} &\sim p(x_k \mid x_{k-1}),\\
                y_k \mid x_k &\sim p(y_k \mid x_k),
        \end{split}
    \end{equation}%
    where $k=1,2,\ldots,n$. Above, $x_k \in \mathbb{R}^{n_x}$ and $y_k \in \mathbb{R}^{n_y}$ are the state and measurement at time step $k$, $p(x_k \mid x_{k-1})$ is the transition density of the states, and $p(y_k \mid x_k)$ is the conditional density of the measurements. The initial distribution at time $k=0$ is given by $x_0 \sim p(x_0)$. Our primary focus is on solving the smoothing problem, which consists in estimating the posterior distribution of the state at time step $k$ given a set of observations, that is, $p(x_k \mid y_{1:n}$), for $k=0,1,\ldots,n$. Furthermore, when the transition, measurement, and prior densities of the SSM depend on a vector of unknown parameters $\theta \in \mathbb{R}^{n_\theta}$, we wish to estimate it based on the observations.

    A solution to the smoothing problem is to compute the marginal posterior distribution of the states recursively by using sequential Bayesian filtering and smoothing equations (see, e.g.,~\cite{sarkka2013bayesian}). In this case, when the SSM at hand is linear and Gaussian, the classical Kalman filter~\cite{kalman1960new} and Rauch-Tung-Striebel (RTS) smoother~\cite{rauch1965maximum} provide closed-form solutions in terms of smoothing means and covariances. However, when the SSM is non-Gaussian, closed-form solutions are often not feasible. In such cases, Gaussian approximated solutions can be obtained by using, for example, Taylor expansions or sigma-point methods, with the extended~\cite{jazwinski2007stochastic, maybeck1982stochastic, bar2004estimation} and the unscented~\cite{haykin2001kalman,julier2000new, julier2004unscented, sarkka2008unscented} Kalman filters and smoothers being the most famous instances of these two methods. The accuracy of these methods can be improved by using iterated smoothing approaches such as iterated extended Kalman smoothers (IEKS)~\cite{bell1994iterated} and iterated posterior linearization (IPLS)~\cite{garcia2016iterated}.
    
    IEKS computes the maximum a posteriori (MAP) estimate of the state trajectory through analytical linearization of the SSM, and it can be viewed as an instance of the Gauss--Newton method~\cite{bell1994iterated}. As for standard Gauss--Newton methods, its convergence can be improved by using line search or Levenberg--Marquardt extensions~\cite{sarkka2020levenberg}. IPLS, on the other hand, iteratively applies sigma-point-based transforms to the posterior distribution of the states. The method has been proposed for nonlinear SSMs with additive Gaussian noise in~\cite{garcia2016iterated}, as well as for the general form of SSM~\eqref{eq:ss-model} in~\cite{tronarp2018iterative}. 

    To improve the numerical stability and robustness of filters and smoothers, square-root Kalman filtering and smoothing methods have been introduced in~\cite{Bierman:1977,grewal2001kalman,arasaratnam2008square,arasaratnam2011cubature}. In these methods, the computations involving covariance matrices are reformulated in terms of their Cholesky factors or similar matrix square roots. This ensures that the covariance matrices are guaranteed to stay positive semi-definite and the number of bits required for a given numerical accuracy is roughly half compared to the conventional covariance matrix formulation. Classically, square-root methods have mainly been needed in short word-length architectures arising in systems with strict energy constraints. Interestingly, shorter word lengths are nowadays also used in more modern high-performance computing devices~\cite{jouppi2017datacenter}, making the need for square-root methods more widely prominent. 

    The aforementioned smoothing methods are based on sequential applications of the Bayesian filtering and smoothing equations which have to be performed in a specific order and, therefore, cannot be trivially parallelized. This prevents leveraging many of the substantial recent improvements made in parallel computing hardware architectures such as graphics processing units (GPUs) and tensor processing units (TPUs)~\cite{jouppi2017datacenter}. In recent studies~\cite{sarkka2020temporal, yaghoobi2021parallel, hassan2021temporal}, a solution to this problem was proposed by reformulating the Bayesian filtering and smoothing equations in terms of associative operators, allowing for parallel-in-time processing by using the associative scan algorithm~\cite{blelloch1989scans}. These articles provided parallel methods for linear-Gaussian, nonlinear with additive Gaussian noise, and finite SSMs. In this article, we extend this class of methods significantly by formulating a square-root form of the parallel associative filtering and smoothing operators, leveraging the generalized statistical linear regression and iterated posterior linearization methodologies to tackle general nonlinear state-space models. We combine both of these to provide a general square-root algorithm for nonlinear state-space models. 
 
    The contributions of this article are the following. (i) We develop novel parallel-in-time square-root versions of linear parallel filtering and smoothing methods proposed in~\cite{sarkka2020temporal}. (ii) We extend the iterated parallel filtering and smoothing methods for the additive-noise SSMs~\cite{yaghoobi2021parallel} to the generalized statistical linear regression (GSLR)~\cite{tronarp2018iterative} method, which makes them applicable to general SSMs of the form~\eqref{eq:ss-model}. This is further combined with our novel square-root parallel associative operators to derive parallel-in-time square-root algorithms for fully nonlinear SSMs. (iii) We propose a compute and memory-efficient way to perform parallel-in-time parameter estimation in state-space models by leveraging the fixed-point structure of our algorithms. (iv) The performance of the proposed methods are then empirically validated on a series of realistic examples run on a GPU.

    The article is organized as follows. Section~\ref{sec:background} gives a brief overview of parallel Bayesian smoothers, the square-root filtering and smoothing methods, and the general linearization method for SSMs. Section~\ref{sec:parallel-sqrt-filter-smoother} focuses on developing a square-root extension of the parallel Kalman filter and RTS smoother. In Section~\ref{sec:parallel-filter-smoother-gslr}, we introduce a parallel-in-time iterated filter and smoother based on GSLR linearization for general SSMs in covariance and square-root of covariance forms. In Section~\ref{sec:sys-identification}, we show how the fixed-point structure of the proposed methods can be leveraged to efficiently compute the gradient of the marginal log-likelihood for the SSM at hand. Finally, in Section~\ref{sec:experiment}, we experimentally validate the computational and statistical behavior of the proposed methods. 
    
\section{Background}
\label{sec:background}
    In this section, we review the building blocks of our proposed method. We first discuss the parallel framework of Bayesian filtering and smoothing~\cite{sarkka2020temporal}, then the general square-root formulations arising in the sequential case, and finally, the GSLR method~\cite{tronarp2018iterative}.
    \subsection{Parallel-in-time Bayesian filtering and smoothing} \label{subsec:background-scan}
        Parallel Bayesian filtering and smoothing~\cite{sarkka2020temporal} is a recent framework that enables efficient inference on parallel platforms. It is based on the parallel-scan algorithm~\cite{blelloch1989scans}, which can operate on a given set of elements $\{a_k\}_{1 \leq k \leq n}$, and a binary associative operator $\otimes$. By using this algorithm, it is possible to parallelize the computation of the full prefix-sum of the set, $(a_1 \otimes a_2 \otimes \ldots \otimes a_K)_{1 \leq K \leq n}$, reducing its sequential time complexity from $O(n)$ to a $O(\log n)$ parallel span-complexity. This framework was recently used in~\cite{sarkka2020temporal} to provide a parallel formulation of Bayesian filtering and smoothing. In their approach, they define a set of elements $a_k$ and a binary associative operator $\otimes$ in such a way that the result of $a_1\otimes \dots \otimes a_k$ recovers the filtering distribution $p(x_k \mid y_{1:k})$ at step $k$; they similarly define another set of elements and operator for the smoothing distribution $p(x_k \mid y_{1:n})$. When specialized to linear Gaussian state-space models these equations give the parallel Kalman filter and RTS smoother \cite{sarkka2020temporal}.
        
    \subsection{General triangular square-root method} \label{subec:background-tria}
        A square-root method refers to a method where the matrix square roots of covariances are carried in computations instead of plain covariances \cite{Bierman:1977,grewal2001kalman}. This reduces the number of bits required to store the values while maintaining a large level of accuracy. In this article, we use square-root methods based on triangular (Cholesky) square roots of covariance matrices. Similar to~\cite{arasaratnam2008square}, in order to propagate the square-root form of a covariance matrix $A$ through the corresponding equations, we first leverage the QR-decomposition as $A^\top = Q \, R$, and then we define the triangularization operator as $\mathrm{Tria}(A) = R^\top$. The triangularization operator has the following properties:
        \begin{enumerate}
        \item For any matrix $A \in\mathbb{R}^{n \times m}$, $\mathrm{Tria}(A)$ is a lower triangular matrix of size ${n \times n}$.
        \item For any matrix $A \in\mathbb{R}^{n \times m}$, we have $\mathrm{Tria}(A) \mathrm{Tria}(A)^\top = A A^\top$. \label{prop:tria-prop-2}
        \end{enumerate}
        In particular, the second property implies that, for any block matrix $C = \begin{pmatrix} A & B \end{pmatrix} \in \mathbb{R}^{n \times (m+k)}$, with $A \in \mathbb{R}^{n \times m}$ and $B \in \mathbb{R}^{n \times k}$, we have
        \begin{equation} \label{eq:CCT}
              \mathrm{Tria}(C) \mathrm{Tria}(C)^\top = A A^\top + B B^\top,
          \end{equation} 
        so that we can sum covariance matrices while retaining the square-root structure. Unfortunately, the same method cannot be employed to compute the square root of the expressions of the form $AA^\top - BB^\top$. However, provided that $AA^\top - BB^\top$ is positive definite, we can compute its square root by iteratively applying rank-1 downdate~\cite{gill1974methods,Bierman:1977,krause2015more} to $A$, iterating over the columns of $B$. We denote this operation  by $\mathrm{DownDate}(A,B)$. 
        
    \subsection{Generalized statistical linear regression (GSLR)} \label{subsec:background-gslr}
        GSLR \cite{tronarp2018iterative} is a way to model the relationship between a dependent variable and one or more independent variables and to use this model to perform estimation. Consider a conditional distribution $y \mid x \sim p(y \mid x)$; this type of relationship is ubiquitous in state estimation problems, where the measurement is a function of the state and some noise $e$: $y = g(x, e)$, but it also covers more general cases. The GSLR method~\cite{tronarp2018iterative} approximates this general nonlinear non-Gaussian system as an affine transformation of a Gaussian random variable. In particular, given an arbitrary probability distribution of $x$ as $p(x)$ with moments (i.e., the mean and covariance) $\mathbb{E}[x]$ and $\mathbb{V}[x]$, two conditional moments of $y$, $\mathbb{E}[y \mid x]$ and $\mathbb{V}[y \mid x]$, and the cross-covariance matrix of $x$ and $y$ as $\mathbb{C}[y, x]$, we can find the best affine approximation $y \approx Hx + d + r$ in terms of mean square error as follows:
        \begin{equation}\label{eq:GSLR-linear-params}
             \begin{split}
                H  &= \mathbb{C}[x, y]^\top \mathbb{V}[x]^{-1}, \quad d = \mathbb{E}[y] - H \mathbb{E}[x], \quad \Omega = \mathbb{V}[y] - H \mathbb{V}[x] H^\top.
             \end{split}
        \end{equation}
        The moments involved in the calculation need to be computed as follows:
        \begin{equation}\label{eq:GSLR-moments}%
            \begin{split}
                \mathbb{E}[y] &=\mathbb{E}[\mathbb{E}[y\mid x]], \quad \mathbb{V}[y] = \mathbb{E}[\mathbb{V}[y\mid x]]+\mathbb{V}[\mathbb{E}[y\mid x]],\quad \mathbb{C}[y, x] =  \mathbb{C} [\mathbb{E}[y \mid x], x].
            \end{split}
        \end{equation}
        Since the integrals involved in computing the quantities in~\eqref{eq:GSLR-moments} are not tractable in the general case, we have to resort to approximations. In this paper, we use two well-known strategies, namely Taylor series expansions \cite{jazwinski2007stochastic, maybeck1982stochastic, bar2004estimation} and sigma-point methods \cite{haykin2001kalman,julier2000new, julier2004unscented, sarkka2008unscented}.

\section{Parallel square-root filter and smoother for linear systems}
\label{sec:parallel-sqrt-filter-smoother}
    In the case of linear Gaussian SSMs, S{\"a}rkk{\"a} and  Garc{\'i}a-Fern\'andez in~\cite{sarkka2020temporal} showed how to define $a_k$ and an associative operator $\otimes$ that enable the use of associative scan primitives to compute the Kalman filtering and RTS smoothing marginals in logarithmic time. Here, we will refer to their method as the standard version. In this section, we aim to derive the square-root version of the parallel Kalman filter and RTS smoother. 
    
    Consider the following affine SSM:
    \begin{equation}\label{eq:linear-model}
        \begin{split} 
              x_k &= F_{k-1} x_{k-1} + c_{k-1} + q_{k-1}, \quad y_k = H_k x_k + d_k + v_k,
        \end{split}
    \end{equation}
    where in the standard version we have $x_0 \sim N(m_0, P_0)$, $q_k \sim N(0, \Lambda_k)$, and $v_k \sim N(0, \Omega_k)$ for $k=1,2,\ldots,n$. In the square-root version, we assume that we know the square-root of covariances with $S_{\Lambda_k}$ and $S_{\Omega_k}$ as the Cholesky forms of  $\Lambda_k$ and $\Omega_k$, that is, $\Lambda_k = S_{\Lambda_k} S_{\Lambda_k}^\top$ and $\Omega_k = S_{\Omega_k} S_{\Omega_k}^\top$. Also, we assume that we know the mean $m_0$ and the square root of the covariance $N_0$ of the initial distribution such that $P_0 = N_0 N_0^\top$. The aim of this section is to develop parallel square-root algorithms to compute, first, the filtering distribution, and second, the smoothing distribution. 

    \subsection{Parallel square-root filter for linear systems}
    \label{subsec:parallel-sqrt-filter}
        The aim of this section is to derive a parallel square-root formulation of Kalman filtering. To do so,  we consider  the parallel Kalman filter proposed in~\cite{sarkka2020temporal} for linear Gaussian SSMs of the form~\eqref{eq:linear-model}, and derive the square-root version of the initialization of parameters $a_k$ and the combination of the parameters corresponding to the associative operator $\otimes$.

        Initialization of the parameters in the filtering step in the standard version can be performed by specifying the element $a_k = \big[p(x_k \mid  y_k,x_{k-1}),  p(y_k \mid x_{k-1})\big]$ as a set $\{A_k, b_k, C_k, \eta_k, J_k\}$~\cite[Lemma 7]{sarkka2020temporal} such that:
            \begin{equation*}
                \begin{split}
                   p(x_k \mid  y_k,x_{k-1}) =N(x_k;A_k x_{k-1} + b_k,C_k),\quad p(y_k \mid x_{k-1})\propto N_I(x_{k-1}; \eta_k,J_k),
                \end{split}
            \end{equation*}
            where $N$ denotes a Gaussian distribution, and $N_I$ the information form of a Gaussian distribution \cite{sarkka2013bayesian}. The aim here is to find the square-root version of these initialization parameters, which we define as a set
            \begin{equation} \label{eq:pi-f}
                \Pi^f_k = \{ A_k, b_k, U_k, \eta_k, Z_k \},
            \end{equation}
            where $C_k=U_k U_k^\top$, and $J_k=Z_k Z_k^\top$, respectively. Given these factorizations, we can then obtain $\Pi^f_k$ using the following lemma. 
    
            \begin{lemma}[Initialization step of square-root parallel Kalman filter] \label{lem:sqrt_filter_k1}
            Tables~\ref{tab:sqrt_filter_k1} and~\ref{tab:sqrt_filter_k} present square-root formulations of~\cite{sarkka2020temporal} to find the initialization parameters of parallel Kalman filter, where we define the following matrices using $\mathrm{Tria}$:
            \begin{equation} \label{eq:sqrt-filter-step1}
            \begin{split}
            \begin{pmatrix}
                  \Psi^-_{11} & 0 \\
                  \Psi^-_{21} & \Psi^-_{22}
            \end{pmatrix}
               &= \mathrm{Tria}\left(
                  \begin{pmatrix}
                  H_1 N_1^- & S_{\Omega_1}\\
                  N_1^- & 0
                  \end{pmatrix}
                 \right),
            \end{split}
            \end{equation}
            \begin{equation} \label{eq:sqrt-filter-init-2}
            \begin{split}
            \begin{pmatrix}
                  \Psi_{11} & 0 \\
                  \Psi_{21} & \Psi_{22}
            \end{pmatrix}
               &= \mathrm{Tria}\left(
                  \begin{pmatrix}
                  H_k S_{\Lambda_{k-1}} & S_{\Omega_{k}} \\
                  S_{\Lambda_{k-1}} & 0
                  \end{pmatrix}
                 \right).
            \end{split}
            \end{equation}
            \end{lemma}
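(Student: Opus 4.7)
The plan is to verify, block by block, that each entry of $\Pi^f_k = \{A_k, b_k, U_k, \eta_k, Z_k\}$ produced by the triangularizations \eqref{eq:sqrt-filter-step1}--\eqref{eq:sqrt-filter-init-2} (together with the scalar expressions tabulated in the referenced tables) is the Cholesky-factor counterpart of the standard-form initialization obtained in Lemma 7 of \cite{sarkka2020temporal}. The common engine is Property~2 of $\mathrm{Tria}$: squaring the block output recovers the same matrix product as squaring the block input.

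First, I would apply Property~2 to the input matrix on the right of \eqref{eq:sqrt-filter-init-2} and expand the resulting block product. This yields the three identities $\Psi_{11}\Psi_{11}^\top = H_k\Lambda_{k-1}H_k^\top + \Omega_k$, $\Psi_{21}\Psi_{11}^\top = \Lambda_{k-1}H_k^\top$, and $\Psi_{21}\Psi_{21}^\top + \Psi_{22}\Psi_{22}^\top = \Lambda_{k-1}$. Reading these off, $\Psi_{11}$ is a Cholesky factor of the innovation covariance $S_k = H_k\Lambda_{k-1}H_k^\top + \Omega_k$, the conditional Kalman-like gain is $\hat K_k = \Psi_{21}\Psi_{11}^{-1} = \Lambda_{k-1}H_k^\top S_k^{-1}$, and $\Psi_{22}$ is a Cholesky factor of $\Lambda_{k-1} - \hat K_k S_k \hat K_k^\top$.

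Second, I would match these factors to $\Pi^f_k$ by using the standard-form expressions from \cite{sarkka2020temporal}. The conditional posterior covariance in the standard form reads $C_k = \Lambda_{k-1} - \hat K_k S_k \hat K_k^\top$, so $U_k = \Psi_{22}$. For the information-form block, $J_k = F_{k-1}^\top H_k^\top S_k^{-1} H_k F_{k-1}$, and substituting $S_k^{-1} = \Psi_{11}^{-\top}\Psi_{11}^{-1}$ shows that $Z_k = F_{k-1}^\top H_k^\top \Psi_{11}^{-\top}$ satisfies $Z_k Z_k^\top = J_k$, as prescribed by the table. The three remaining quantities $A_k = (I-\hat K_k H_k)F_{k-1}$, $b_k = (I-\hat K_k H_k)c_{k-1} + \hat K_k(y_k - d_k)$, and $\eta_k = F_{k-1}^\top H_k^\top S_k^{-1}(y_k - d_k - H_k c_{k-1})$ then follow from the standard-form expressions by replacing $\hat K_k$ with $\Psi_{21}\Psi_{11}^{-1}$ and $S_k^{-1}$ with $\Psi_{11}^{-\top}\Psi_{11}^{-1}$, which is precisely what is tabulated.

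Third, I would repeat this reasoning for the base case $k=1$ using \eqref{eq:sqrt-filter-step1}, where $N_1^-$ is a Cholesky factor of the one-step predictive covariance $F_0 P_0 F_0^\top + \Lambda_0$. The same block expansion now gives $\Psi^-_{11}(\Psi^-_{11})^\top = H_1 P_1^- H_1^\top + \Omega_1$ and $\Psi^-_{22}(\Psi^-_{22})^\top = P_1^- - \hat K_1 S_1 \hat K_1^\top$, which is the filtered covariance at step 1; since the filter element $a_1$ encodes $p(x_1\mid y_1)$ directly, we read off $U_1 = \Psi^-_{22}$, $A_1 = 0$, $Z_1 = 0$, $\eta_1 = 0$, and $b_1$ equals the filtered mean obtained from the updated tabulated expressions.

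The step I expect to require the most care is the information-form factor $Z_k$: when $n_y < n_x$ or $H_k$ is rank-deficient, $J_k$ is singular and $Z_k$ is not a square triangular factor but merely a matrix square root. One must therefore check that $Z_k Z_k^\top = J_k$ suffices for every downstream use of $Z_k$ in the associative combination operator, and that the block sizes produced by the QR-based $\mathrm{Tria}$ remain consistent with this generalized Cholesky notion. Once this is confirmed, all remaining algebra is a direct substitution of the block identities obtained from Property~2 into the standard-form initialization.
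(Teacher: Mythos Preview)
Your overall approach coincides with the paper's: apply Property~2 of $\mathrm{Tria}$ to the block inputs of \eqref{eq:sqrt-filter-step1} and \eqref{eq:sqrt-filter-init-2}, expand the resulting Gram matrices, and match the blocks to the standard-form quantities of \cite[Lemma~7]{sarkka2020temporal}. The paper also handles your shape concern about $Z_k$ in the same spirit you anticipate: when $n_x \neq n_y$ it either pads $F_{k-1}^\top H_k^\top Y_k^{-\top}$ with zero columns or replaces it by its $\mathrm{Tria}$ to obtain a square $Z_k$.

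There is, however, a concrete error in your base case. You assert $Z_1 = 0$ and $\eta_1 = 0$ on the grounds that $a_1$ ``encodes $p(x_1\mid y_1)$ directly''. But the lemma's content is precisely that the square-root expressions in Table~\ref{tab:sqrt_filter_k1} reproduce the standard initialization of \cite{sarkka2020temporal}, and that initialization has $J_1 = (H_1 F_0)^\top S_1^{-1} H_1 F_0$ and a nonzero $\eta_1$. The table accordingly sets $Z_1 = F_0^\top H_1^\top Y_1^{-\top}$ and $\eta_1 = Z_1 Y_1^{-1}(y_1 - H_1 c_0 - d_1)$, and these are the identities you must verify (by substituting $S_1^{-1} = Y_1^{-\top}Y_1^{-1}$), not bypass. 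It is true that, because $A_1 = 0$, these quantities never affect the scan's output; but the lemma asserts equivalence with the tabulated formulas, not merely correctness of the final filtering result.
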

            
            \begin{table}[tbhp]
            \begin{center}
              \begin{tabular}{c|c} 
               \textbf{Standard version}~\cite{sarkka2020temporal}& \bf Square-root version  \\ [0.1cm]
               $m_1^- = F_{0} m_{0} + c_{0}$, & $m_1^- = F_{0} m_{0} + c_{0}$,\\ [0.2cm]
                $P_1^- = F_{0} P_{0} F^\top_{0} + \Lambda_{0}$, &  $N_1^- = \mathrm{Tria}\big(\begin{pmatrix}
                F_0 N_0 & S_{\Lambda_0} \end{pmatrix}\big)$,\\ [0.2cm]
                $S_1 = H_1 P_1^- H_1^\top + \Omega_1$, & $Y_1 = \Psi^-_{11}$, \quad \\[0.2cm]
                $K_1 = P_1^- H^\top_1 S_1^{-1}$, & $K_1 = \Psi^-_{21} (\Psi^-_{11})^{-1}$, \\[0.2cm]
                $A_1=0$, & $A_1=0$ , \\[0.2cm]
                $b_1 = m_1^- + K_1 [y_1 - H_1 m_1^- - d_1]$, & $b_1 = m_1^- + K_1 [y_1 - H_1 m_1^- - d_1]$, \\[0.2cm]
                $C_1 = P_1^- - K_1 S_1 K_1^\top$, & $U_1  = \Psi^-_{22}$, \\[0.2cm]
                $J_1 = (H_1 F_{0})^\top S_1^{-1} H_1 F_{0}$, &$Z_1 = F_{0}^{\top}H_{1}^{\top} Y_1^{-\top},$  \\[0.2cm]
                $\eta_1 = (H_1 F_{0} )^\top S_1^{-1} H_1  (y_1 - H_1 c_{0} - d_1)$ & $\eta_1 = Z_1 Y_1^{-1} \left(y_{1}-H_{1}c_{0}-d_{1}\right). $
              \end{tabular}
            \caption{Initialization of the parameters for parallel Kalman filter when $k=1$. On the left there are the covariance-based equations, and on the right there are the proposed square-root versions.}
            \label{tab:sqrt_filter_k1}
            \end{center}
            \end{table}
           \begin{table}[tbhp]
            \begin{center}
            \resizebox{\columnwidth}{!}{%
              \begin{tabular}{c|c} 
               \textbf{ Standard version}~\cite{sarkka2020temporal}& \bf Square-root version  \\ [0.1cm]
                $A_k = (I_{n_x} - K_k H_k)F_{k-1},$ &  $ A_k = (I_{n_x} - K_k H_k)F_{k-1}$,\\ [0.2cm]
                $b_k = c_{k-1} + K_k( y_k - H_k c_{k-1} - d_{k-1})$, & $b_k = c_{k-1} + K_k( y_k - H_k c_{k-1} - d_{k-1})$, \\[0.2cm]
                $C_k = (I_{n_x} -K_k H_k) \Lambda_{k-1}$, & $ U_k = \Psi_{22}$, \\[0.2cm]
                $K_k = \Lambda_{k-1} {H_k}^\top S_k^{-1}$, &$K_k = \Psi_{21} \Psi_{11}^{-1}$  \\[0.2cm]
                $S_k = H_k \Lambda_{k-1} H_k^\top+ \Omega_k,$ & $Y_k = \Psi_{11}$ \\[0.2cm]
                $J_k = (H_k F_{k-1})^\top S_k^{-1} H_k F_{k-1}$ & $Z_k = F_{k-1}^{\top}H_{k}^{\top} Y_k^{-\top}$\footnotemark{}   \\[0.2cm]
                $\eta_k = (H_k F_{k-1} )^\top S_k^{-1} H_k  (y_k - H_k c_{k-1} - d_k)$ & $\eta_k = Z_k Y_k^{-1} \left(y_{k}-H_{k}c_{k-1}-d_{k}\right)$.
              \end{tabular}
            }
            \caption{Initialization of parameters for parallel Kalman filter for $1 < k \leq n$. On the left there are the covariance-based equation, and on the right there are the proposed square-root versions.}
            \label{tab:sqrt_filter_k}
            \end{center}
            \end{table}

            \begin{proof}
            Let $N_1^- =\mathrm{Tria}\big(\begin{pmatrix}
                F_0 N_0 & S_{\Lambda_0} \end{pmatrix}\big)$. 
            By applying~\eqref{eq:CCT}, we obtain $N_1^- [N_1^-]^\top = P_1^-$. Then, by defining the block matrix in \eqref{eq:sqrt-filter-step1} and applying~\eqref{eq:CCT}, we have
            \begin{equation} \label{eq:sqrt-filter-pre-step1}
                \begin{split}
                      \begin{pmatrix}
                      H_1 N_1^- & S_{\Omega_1}\\
                      N_1^- & 0
                      \end{pmatrix}&
                      \begin{pmatrix}
                      H_1 N_1^- & S_{\Omega_1}\\
                      N_1^- & 0
                      \end{pmatrix}^\top = \begin{pmatrix}
                      H_1 P_1^- H_1^T + \Omega_1 & H_1 P_1^- \\
                      P_1^- H_1^\top & 0
                      \end{pmatrix}.
                \end{split}
            \end{equation}
            Noting that $S_1 = Y_1 Y_1^\top$, $C_1=U_1 U_1^\top$, and $J_1=Z_1 Z_1^\top$, the parallel square-root elements $\Pi^f_1$ in \eqref{eq:pi-f} will be obtained as in the equations on the right side of Table~\ref{tab:sqrt_filter_k1}. Please note that $Z_k$ is not a square matrix when $n_x \ne n_y$. In order to make the matrix $Z_k$ square, we can either extend it with zero columns to the right, $Z_k = \begin{pmatrix} F_{k-1}^{\top}H_{k}^{\top} Y_k^{-\top} & 0 \end{pmatrix}$, when $n_x > n_y$ or replace it with $\mathrm{Tria}(Z_k)$ when $n_x < n_y$.

            For $1 < k \leq n$, initialization parameters $A_k$ and $b_k$ are the same in both versions. For the parameters $(Y_k, K_k, U_k)$, by using \eqref{eq:sqrt-filter-init-2}  and applying~\eqref{eq:CCT} we have
            \begin{equation} \label{eq:sqrt-filter-init-1}
            \begin{split}
                \begin{pmatrix}
                  H_k S_{\Lambda_{k-1}} & S_{\Omega_{k}} \\
                  S_{\Lambda_{k-1}} & 0
                  \end{pmatrix} & \begin{pmatrix}
                  H_k S_{\Lambda_{k-1}} & S_{\Omega_{k}} \\
                  S_{\Lambda_{k-1}} & 0
                  \end{pmatrix}^\top = \begin{pmatrix}
                  H_k {\Lambda_{k-1}} H_k^\top & H_k \Lambda_{k-1} \\
                  \Lambda_{k-1} H_k^\top & \Lambda_{k-1}
                  \end{pmatrix},
            \end{split}
            \end{equation}
            which concludes the proof.
            \end{proof}
            Noting that $S_k = Y_k Y_k^\top$, $C_k = U_k U_k^\top$, and $J_k = Z_k Z_k^\top$, the parallel square-root elements $\Pi^f_k$ in \eqref{eq:pi-f} will be obtained as in the equations on the right side of Table~\ref{tab:sqrt_filter_k}. 
            
            The next step consists in describing the binary associative operator $\otimes$. In the standard version~\cite[Lemma 8]{sarkka2020temporal}, the resulting parametric version of the associative filtering operator is given by $\{A_i, b_i, C_i, \eta_i, J_i\} \otimes \{A_j, b_j, C_j, \eta_j, J_k\} \coloneqq \{A_{ij}, b_{ij}, C_{ij}, \eta_{ij}, J_{ij}\}$ which is defined by the equations on the left hand side of Table~\ref{tab:sqrt_combination_filter}. In the square-root version, we replace this operator with 
             \begin{equation}
            \begin{split}
                 \{A_i, b_i, U_i, \eta_i, Z_i\} \otimes \{A_j, b_j, U_j, \eta_j, Z_k\} &=  \{A_{ij}, b_{ij}, U_{ij}, \eta_{ij}, Z_{ij}\} \\
            \end{split}
            \end{equation}
            where $J_i = Z_i Z_i^\top$, $C_i = U_i U_i^\top$, $J_j = Z_j Z_j^\top$, $C_j = U_j U_j^\top$, $J_{ij} = Z_{ij} Z_{ij}^\top$, and $C_{ij} = U_{ij} U_{ij}^\top$.
            \begin{lemma}[Combination step of square-root parallel Kalman filter] \label{lem:sqrt_combination_filter}
            Table~\ref{tab:sqrt_combination_filter} presents a square-root formulation of the combination step of the parallel Kalman filter~\cite{sarkka2020temporal}, where we define the following matrix using $\mathrm{Tria}$:
            \begin{equation} \label{eq:sqrt-filter-com-2}
            \begin{split}
              \begin{pmatrix}
              \Xi_{11} & 0 \\
              \Xi_{21} & \Xi_{22}
              \end{pmatrix}
              =
              \mathrm{Tria}\left(
              \begin{pmatrix}
              U_i^\top Z_j & I_{n_x} \\
              Z_j   &     0
              \end{pmatrix}
              \right).
            \end{split}
            \end{equation}
             \begin{table}[tbhp]
            \begin{center}
            \resizebox{\columnwidth}{!}{%
              \begin{tabular}{c|c} 
               \textbf{Standard version}~\cite{sarkka2020temporal}& \bf Square-root version  \\ [0.1cm]
                $A_{ij} = A_j (I_{n_x} + C_i J_j)^{-1} A_i,$ &  $ A_{ij} = A_j A_i - A_j U_i \Xi_{11}^{-\top} \Xi_{21}^\top A_i$,\\ [0.2cm]
                $b_{ij} = A_j (I_{n_x} + C_i J_j)^{-1} (b_i + C_i \eta_j) + b_j$, & $b_{ij} = A_{j}\left(I_{n_x} - U_i \Xi_{11}^{-\top} \Xi_{21}^\top\right)\left(b_{i}+U_{i} U_{i}^\top\eta_{j}\right)+b_{j}$, \\[0.2cm]
                $C_{ij} = A_j (I_{n_x} + C_i J_j)^{-1} C_i A_j^\top + C_j$, & $U_{ij} = \mathrm{Tria}\left(\begin{pmatrix} A_j U_i \Xi_{11}^{-\top} & U_j \end{pmatrix}\right)$, \\[0.2cm]
                $\eta_{ij} = A_i^\top (I_{n_x} + J_j C_i)^{-1} (\eta_j - J_j b_i) + \eta_i$, &$\eta_{ij} = A_{i}^{\top}\left(I_{n_x} - \Xi_{21} \Xi_{11}^{-1} U_i^\top\right) \left(\eta_{j}-Z_{j}Z_{j}^\top b_{i}\right)+\eta_{i}$,  \\[0.2cm]
                $J_{ij} = A_i^\top (I_{n_x} + J_j C_i)^{-1} J_j A_i + J_i$ & $Z_{ij} = \mathrm{Tria}\left(\begin{pmatrix} A_i^\top  \Xi_{22} & Z_i \end{pmatrix}\right).$
              \end{tabular}
              }
              \caption{Combination step for parallel Kalman filter.  On the left, there are the covariance-based equations, and on the right there are the proposed square-root versions.}
              \label{tab:sqrt_combination_filter}
            \end{center}
            \end{table}
            \end{lemma}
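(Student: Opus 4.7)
The plan is to verify that each right-column formula in Table~\ref{tab:sqrt_combination_filter} yields the same object as its left-column counterpart in~\cite{sarkka2020temporal}, with the bookkeeping of the square-roots carried by the block triangularization in~\eqref{eq:sqrt-filter-com-2}. First I would unpack~\eqref{eq:sqrt-filter-com-2} using property~\eqref{eq:CCT}: multiplying the right-hand side block matrix by its transpose gives
\begin{equation*}
\begin{pmatrix} U_i^\top J_j U_i + I_{n_x} & U_i^\top J_j \\ J_j U_i & J_j \end{pmatrix}
= \begin{pmatrix} \Xi_{11}\Xi_{11}^\top & \Xi_{11}\Xi_{21}^\top \\ \Xi_{21}\Xi_{11}^\top & \Xi_{21}\Xi_{21}^\top + \Xi_{22}\Xi_{22}^\top \end{pmatrix},
\end{equation*}
which gives the three key identities $\Xi_{11}\Xi_{11}^\top = I_{n_x} + U_i^\top J_j U_i$, $\Xi_{21} = J_j U_i \Xi_{11}^{-\top}$, and $\Xi_{22}\Xi_{22}^\top = J_j - \Xi_{21}\Xi_{21}^\top$. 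Note that $\Xi_{11}$ is invertible since $I_{n_x} + U_i^\top J_j U_i$ is positive definite.

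Next I would apply the Woodbury identity to $(I_{n_x} + C_i J_j)^{-1} = (I_{n_x} + U_i U_i^\top J_j)^{-1}$ to obtain
\begin{equation*}
(I_{n_x} + C_i J_j)^{-1} = I_{n_x} - U_i (I_{n_x} + U_i^\top J_j U_i)^{-1} U_i^\top J_j = I_{n_x} - U_i \Xi_{11}^{-\top} \Xi_{21}^\top,
\end{equation*}
where the last equality uses both $\Xi_{11}\Xi_{11}^\top = I_{n_x} + U_i^\top J_j U_i$ and $\Xi_{21} = J_j U_i \Xi_{11}^{-\top}$. Substituting this expression into the standard formulas for $A_{ij}$ and $b_{ij}$ immediately recovers the right-column formulas. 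A symmetric computation for $(I_{n_x} + J_j C_i)^{-1} = I_{n_x} - \Xi_{21}\Xi_{11}^{-1} U_i^\top$ (transpose the previous identity) substituted into the standard formula for $\eta_{ij}$ yields the matching square-root expression.

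It remains to handle the covariance quantities $U_{ij}$ and $Z_{ij}$. For $U_{ij}$, I would use the standard push-through identity $(I_{n_x} + C_i J_j)^{-1} C_i = U_i (I_{n_x} + U_i^\top J_j U_i)^{-1} U_i^\top = U_i \Xi_{11}^{-\top} \Xi_{11}^{-1} U_i^\top$, so that
\begin{equation*}
C_{ij} = A_j (I_{n_x} + C_i J_j)^{-1} C_i A_j^\top + C_j = (A_j U_i \Xi_{11}^{-\top})(A_j U_i \Xi_{11}^{-\top})^\top + U_j U_j^\top,
\end{equation*}
and the claim $U_{ij} U_{ij}^\top = C_{ij}$ follows from property~\eqref{eq:CCT} applied to the concatenated block. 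For $Z_{ij}$, the third identity from the first paragraph gives $(I_{n_x} + J_j C_i)^{-1} J_j = J_j - \Xi_{21}\Xi_{21}^\top = \Xi_{22}\Xi_{22}^\top$, whence
\begin{equation*}
J_{ij} = A_i^\top (I_{n_x} + J_j C_i)^{-1} J_j A_i + J_i = (A_i^\top \Xi_{22})(A_i^\top \Xi_{22})^\top + Z_i Z_i^\top,
\end{equation*}
and again property~\eqref{eq:CCT} closes the case.

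The only real obstacle is identifying the right Woodbury manipulation and, concurrently, the correct blocks to place inside the triangularization in~\eqref{eq:sqrt-filter-com-2} so that the three identities for $\Xi_{11}\Xi_{11}^\top$, $\Xi_{21}$, and $\Xi_{22}\Xi_{22}^\top$ fit simultaneously into the formulas for all five output quantities; everything else is algebraic bookkeeping. As with Lemma~\ref{lem:sqrt_filter_k1}, if $\Xi_{22}$ happens not to be square because $n_x$ differs from the dimension of $J_j$'s range, one pads with zero columns or re-triangularizes to land in the desired shape without changing the outer product.
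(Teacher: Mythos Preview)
Your proposal is correct and follows essentially the same approach as the paper: both proofs extract the three block identities from~\eqref{eq:sqrt-filter-com-2} via~\eqref{eq:CCT}, apply the Woodbury/matrix-inversion lemma to rewrite $(I_{n_x}+C_iJ_j)^{-1}$ and its transpose in terms of $\Xi_{11},\Xi_{21}$, use the push-through identity for the $C_{ij}$ term, and the Schur-complement relation $\Xi_{22}\Xi_{22}^\top=J_j-\Xi_{21}\Xi_{21}^\top$ for the $J_{ij}$ term. The only cosmetic difference is ordering: you isolate the three $\Xi$-identities first and then substitute, whereas the paper interleaves the Woodbury step with the extraction of the block relations.
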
 
            \begin{proof}  In Lemma~\ref{lem:sqrt_combination_filter}, we start by  using matrix inversion lemma (see, e.g.,~\cite{anderson1979optimal}) to recover the square-root version of $ (I_{n_x} + J_j C_i)^{-1} $ appearing in the standard combination formulas as follows:
            \begin{equation} \label{eq:sqrt-com-inversion}
            \begin{split}
              (I_{n_x} + J_j C_i)^{-1} 
              &= (I_{n_x} + J_j U_i I_{n_x} U_i^\top)^{-1} = I_{n_x} - J_j U_i (I_{n_x} + U_i^\top J_j U_i)^{-1} U_i^\top.
            \end{split}
            \end{equation}
            Then by defining the block matrix~\eqref{eq:sqrt-filter-com-2} and applying~\eqref{eq:CCT} we have
            \begin{equation}
            \begin{split} \label{eq:sqrt-filter-com-1}
              \begin{pmatrix}
              U_i^\top Z_j & I_{n_x} \\
              Z_j   &     0
              \end{pmatrix}  & \begin{pmatrix}
              U_i^\top Z_j  & I_{n_x} \\
              Z_j        & 0
              \end{pmatrix}^\top = \begin{pmatrix}
              U_i^\top J_j U_i + I_{n_x}   & U_i^\top J_j \\
              J_j U_i             &    J_j
              \end{pmatrix}.
            \end{split}
            \end{equation}
            Then, because
            \begin{equation}
            \begin{split}
               \begin{pmatrix}
                  \Xi_{11} & 0 \\
                  \Xi_{21} & \Xi_{22}
               \end{pmatrix}&
               \begin{pmatrix}
                  \Xi_{11} & 0 \\
                  \Xi_{21} & \Xi_{22}
               \end{pmatrix}^\top = 
               \begin{pmatrix}
                 \Xi_{11} \Xi_{11}^\top & \Xi_{11} \Xi_{21}^\top \\
                 \Xi_{21} \Xi_{11}^\top & \Xi_{21} \Xi_{21}^\top + \Xi_{22} \Xi_{22}^\top
               \end{pmatrix},
            \end{split}
            \end{equation}
            we can recover the following equations:
            \begin{equation}
            \begin{split}\label{eq:Xis}
              \Xi_{11} \Xi_{11}^\top &= U_i^\top J_j U_i + I_{n_x}, \quad \Xi_{21} \Xi_{11}^\top = J_j U_i, \quad \Xi_{21} \Xi_{21}^\top + \Xi_{22} \Xi_{22}^\top = J_j.
            \end{split}
            \end{equation}
            Hence we can write:
            \begin{equation}
              J_j U_i (I_{n_x} + U_i^\top J_j U_i)^{-1} U_i^\top = \Xi_{21} \Xi_{11}^{-1} U_i^\top,
            \end{equation}
            so that~\eqref{eq:sqrt-com-inversion} can be rewritten as:
            \begin{equation}
              (I_{n_x} + J_j C_i)^{-1}
              = I_{n_x} - \Xi_{21} \Xi_{11}^{-1} U_i^\top.
            \end{equation}
            We also know that
            \begin{equation} \label{eq:sqrt-filter-com-3}
              (I_{n_x} + C_i J_j)^{-1}
              = [ (I_{n_x} + J_j C_i)^{-1} ]^\top
              = I_{n_x} - U_i \Xi_{11}^{-\top} \Xi_{21}^\top,
            \end{equation}
            so that~\eqref{eq:sqrt-filter-com-3} enables us to express $A_{ij}$ as follows:
            \begin{equation}
              A_{ij} = A_j A_i - A_j U_i \Xi_{11}^{-\top} \Xi_{21}^\top A_i.
            \end{equation}
            For $b_{ij}$ and $C_{ij}$ we need the following term:
            \begin{equation} \label{eq:sqrt-filter-com-4}
            \begin{split}
              (I_{n_x} + C_i J_j)^{-1} C_i &= C_i (I_{n_x} + J_j C_i)^{-1} \\
              &= U_i U_i^\top - U_i U_i^\top J_j U_i (I_{n_x} + U_i^\top J_j U_i)^{-1} U_i^\top \\
              &= U_i (I_{n_x} - U_i^\top J_j U_i (I_{n_x} + U_i^\top J_j U_i)^{-1}) U_i^\top \\
              &= U_i ((I_{n_x} + U_i^\top J_j U_i) (I_{n_x} + U_i^\top J_j U_i)^{-1} \\
              &- U_i^\top J_j U_i (I_{n_x} + U_i^\top J_j U_i)^{-1}) U_i^\top \\
              &= U_i (I_{n_x} + U_i^\top J_j U_i)^{-1} U_i^\top = U_i [\Xi_{11} \Xi_{11}^\top]^{-1} U_i^\top,
            \end{split}
            \end{equation}
            which gives the following factorization:
            \begin{equation} \label{eq:sqrt-filter-com-5}
            \begin{split}
              A_j (I_{n_x} + C_i J_j)^{-1} C_i A_j^\top = [A_j U_i \Xi_{11}^{-\top}] [A_j U_i \Xi_{11}^{-\top}]^\top.
            \end{split}
            \end{equation}
            Equation~\eqref{eq:sqrt-filter-com-5} can be summed with $C_j$ by using the triangularization property~\eqref{eq:CCT} to give $C_{ij} = A_j (I_{n_x} + C_i J_j)^{-1} C_i A_j^\top + C_j$.  For $J_{ij}$ we will further need:
            \begin{equation} \label{eq:sqrt-filter-com-6}
            \begin{split}
              (I_{n_x} + J_j C_i)^{-1} J_j
              &= (I_{n_x} - \Xi_{21} \Xi_{11}^{-1} U_i^\top) J_j \\
              &= J_j - \Xi_{21} \Xi_{11}^{-1} [\Xi_{21} \Xi_{11}^\top]^\top \\
              &= \Xi_{21} \Xi_{21}^\top + \Xi_{22} \Xi_{22}^\top - \Xi_{21} \Xi_{21}^\top =  \Xi_{22} \Xi_{22}^\top.
            \end{split}
            \end{equation}
            Using~\eqref{eq:sqrt-filter-com-6} will give us $A_i^\top (I_{n_x} + J_j C_i)^{-1} J_j A_i= [A_i^\top \Xi_{22}] [A_i^\top \Xi_{22}]^\top$ which can be summed with $J_i$ using the triangulation property~\eqref{eq:CCT} again to give $J_{ij}$.
            \end{proof}
            We summarize the square-root version of the parallel filtering step in Algorithm~\ref{Alg:parallel-filtering-sqrt-std}.
            \begin{algorithm}[tbhp]
            \caption{\textsc{Standard} or \textsc{Square-root} versions of parallel Kalman filter}\label{Alg:parallel-filtering-sqrt-std}
            \begin{algorithmic}[1]
            \renewcommand{\algorithmicrequire}{\textbf{Input:}}
            \renewcommand{\algorithmicensure}{\textbf{Output:}}
            \REQUIRE SSM with parameters in~\eqref{eq:linear-model}, measurements $y_{1:n}$, standard version with initial mean and covariance $\{m_0, P_0\}$ or square-root version with initial mean and Cholesky of covariance $\{m_0, N_0\}$.
            \ENSURE Filtering results $\{m^f_{0:n}, P^f_{0:n}\}$ or $\{m^f_{0:n}, N^f_{0:n}\}$.
            \STATE
            \COMMENT {Initialization}
            \FOR[Compute in parallel]{$k\gets1$ \textbf{to} $n$}
            		\STATE {Set $a_k = \{A_k, b_k, C_k \, , \eta_k, J_k \}$ or $a_k = \{A_k, b_k, U_k, \eta_k, Z_k\}$ according to Lemma~\ref{lem:sqrt_filter_k1} in \textsc{Standard} or \textsc{Square-root} versions.}
            \ENDFOR
            \STATE{\COMMENT {Combination}}
            \STATE {Set $(\{A_k, b_k, C_k \, \text{or} \, U_k, \eta_k, J_k \, \text{or} \, Z_k\})_{k=1}^n = \mathrm{AssociativeScan}(\otimes, (a_k)_{k=1}^n)$,\\
            where $\otimes$ defined in~Lemma~\ref{lem:sqrt_combination_filter} in \textsc{Standard} or \textsc{Square-root} versions.}
            \STATE{\COMMENT{Filtering results}}
             \STATE{Set $\{m^f_0, P^f_0 \, \text{or} \, N^f_0\} = \{m_0, P_0 \, \text{or} \, N_0\}$ and extract $\{m^f_{1:n}, P^f_{1:n} \, \text{or} \, N^f_{1:n}\} = \{b_{1:n}, C_{1:n} \, \text{or} \, U_{1:n}\}$ in \textsc{Standard} or \textsc{Square-root} versions.}
            \end{algorithmic}
            \end{algorithm}
\subsection{Parallel square-root smoother for linear systems}
\label{subsec:parallel-sqrt-smoother}
The aim in this section is to review the standard parallel RTS smoother formulation and develop its square-root version. Similarly to the parallel filtering case, we need to develop square-root formulations of both the initialization and combination steps. 

To this end, we consider that we have the SSM with parameters in~\eqref{eq:linear-model}. Also, we assume that we have obtained the filtering results, $\{m^f_{k}$, $P^f_{k}\}$, in the standard version or $\{m^f_{k}$, $N^f_{k}\}$ in the square-root version such that $P^f_k = N^f_k [N^f_k]^\top$  as in Algorithm~\ref{Alg:parallel-filtering-sqrt-std}. 

In the standard version, the initialization of parameters can be obtained by parametrizing the element $a_k =  p(x_k \mid y_{1:k}, x_{k+1}) $ as a set $\{E_k,g_k,L_k\}$ such that $p(x_k \mid y_{1:k}, x_{k+1}) = N(x_k; E_k x_{k+1} + g_k, L_k)$~\cite[Lemma 9]{sarkka2020temporal}. We now derive the square-root version of the smoothing initialization elements, which we define as
\begin{equation}
    \Pi^s_k = \{E_k, g_k, D_k \},
\end{equation}
where $L_k = D_k D_k^\top$. The square-root version of the initialization parameters for parallel Kalman smoother is given in the following lemma.
\begin{lemma}[Initialization step of square-root parallel RTS smoother] \label{lem:init_smoother}
Table~\ref{tab:init_smoother} presents square-root formulations of~\cite{sarkka2020temporal} to find the initialization parameters of parallel Kalman smoother for $0 \leq k < n $, where we define the following matrix using $\mathrm{Tria}$:
 \begin{equation}
\begin{split} \label{eq:smooth-init-2}
   \begin{pmatrix}
      \Phi_{11} & 0 \\
      \Phi_{21} & \Phi_{22}
   \end{pmatrix}
   = \mathrm{Tria}\left(
      \begin{pmatrix}
      F_k N^f_k & S_{\Lambda_k} \\
      N^f_k & 0
   \end{pmatrix}
     \right).
\end{split}
\end{equation}

\begin{table}[tbhp]
\begin{center}
  \begin{tabular}{c|c} 
       \textbf{Standard version}~\cite[Lemma 9]{sarkka2020temporal}& \bf Square-root version  \\ [0.1cm]
       $E_k = P_k F_k^\top  (F_k P^f_k F_k^\top + \Lambda_{k-1} )^{-1}$, & $ E_k = \Phi_{21} \Phi_{11}^{-1}$,\\ [0.2cm]
        $g_k = m^f_k - E_k (F_k m^f_k + c_k)$, &  $g_k = m^f_k - E_k(F_k m^f_k + c_k)$,\\ [0.2cm]
        $L_k = P^f_k - E_k F_k P^f_k$, & $D_k = \Phi_{22}$.
  \end{tabular}
  \caption{Initialization step in parallel RTS smoother for $0 \leq k < n $.  On the left there are the covariance-based formulations, and on the right there are the proposed square-root version. For the last step, the initialization parameters are $E_n = 0$, $g_n = m^f_n$, and $ D_n = N^f_n$.}
  \label{tab:init_smoother}
\end{center}
\end{table}
\end{lemma}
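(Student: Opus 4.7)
The plan is to exploit the triangularization identity \eqref{eq:CCT} applied to the block matrix on the right-hand side of \eqref{eq:smooth-init-2}, and then read off the three blocks of the identity $\mathrm{Tria}(M)\mathrm{Tria}(M)^\top = M M^\top$ one at a time. Since $E_k$, $g_k$, $D_k$ in the square-root column only use entries of $\Phi$, matching blocks will directly pin down each of them against the standard formulas.

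First I would expand the right-hand side. Using $P^f_k = N^f_k (N^f_k)^\top$ and $\Lambda_k = S_{\Lambda_k} S_{\Lambda_k}^\top$, a direct computation gives
\begin{equation*}
\begin{pmatrix} F_k N^f_k & S_{\Lambda_k} \\ N^f_k & 0 \end{pmatrix}
\begin{pmatrix} F_k N^f_k & S_{\Lambda_k} \\ N^f_k & 0 \end{pmatrix}^\top
=
\begin{pmatrix} F_k P^f_k F_k^\top + \Lambda_k & F_k P^f_k \\ P^f_k F_k^\top & P^f_k \end{pmatrix}.
\end{equation*}
Expanding the lower-triangular product on the left yields
\begin{equation*}
\begin{pmatrix} \Phi_{11}\Phi_{11}^\top & \Phi_{11}\Phi_{21}^\top \\ \Phi_{21}\Phi_{11}^\top & \Phi_{21}\Phi_{21}^\top + \Phi_{22}\Phi_{22}^\top \end{pmatrix}.
\end{equation*}
Matching blocks gives three identities: $\Phi_{11}\Phi_{11}^\top = F_k P^f_k F_k^\top + \Lambda_k$, $\Phi_{21}\Phi_{11}^\top = P^f_k F_k^\top$, and $\Phi_{21}\Phi_{21}^\top + \Phi_{22}\Phi_{22}^\top = P^f_k$.

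With these in hand, the square-root gain satisfies $\Phi_{21}\Phi_{11}^{-1} = P^f_k F_k^\top \Phi_{11}^{-\top}\Phi_{11}^{-1} = P^f_k F_k^\top (F_k P^f_k F_k^\top + \Lambda_k)^{-1}$, so $E_k$ coincides with its standard counterpart. The formula for $g_k$ carries over verbatim since it depends on $E_k$ only through the already validated expression. For $D_k$, I would chain the (2,1) and (1,1) identities to write
\begin{equation*}
\Phi_{21}\Phi_{21}^\top
= \Phi_{21}\Phi_{11}^\top \bigl(\Phi_{11}\Phi_{11}^\top\bigr)^{-1} \Phi_{11}\Phi_{21}^\top
= P^f_k F_k^\top (F_k P^f_k F_k^\top + \Lambda_k)^{-1} F_k P^f_k = E_k F_k P^f_k,
\end{equation*}
and substitute into the (2,2) identity to obtain $\Phi_{22}\Phi_{22}^\top = P^f_k - E_k F_k P^f_k = L_k$. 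Hence $D_k := \Phi_{22}$ is a valid lower-triangular Cholesky factor of $L_k$.

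There is no real obstacle beyond careful block bookkeeping; the entire argument reduces to noticing that the Schur-complement structure of $MM^\top$ with respect to its (1,1) block already encodes the standard smoothing update in factored form. The boundary case $k=n$ mentioned in the caption of Table~\ref{tab:init_smoother} is immediate: with $E_n = 0$ and $g_n = m^f_n$, the conditional $p(x_n \mid y_{1:n})$ reduces to the filtering marginal whose Cholesky factor is precisely $N^f_n = D_n$.
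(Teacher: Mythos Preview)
Your proof is correct and follows the same approach as the paper: apply the triangularization identity \eqref{eq:CCT} to the block matrix in \eqref{eq:smooth-init-2}, expand $MM^\top$, and match blocks against the lower-triangular product $\Phi\Phi^\top$ to recover $E_k$, $g_k$, and $D_k$. In fact you are more explicit than the paper's own proof, which merely states the block product identity and asserts that the square-root parameters follow, whereas you spell out the three block equations and the Schur-complement manipulation that pins down $\Phi_{22}\Phi_{22}^\top = L_k$.
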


\begin{proof}
In Lemma~\ref{lem:init_smoother}, we apply~\eqref{eq:CCT} in~\eqref{eq:smooth-init-2} then we have
\begin{equation}
\begin{split} \label{eq:smooth-init-1}
   \begin{pmatrix}
      F_k N^f_k & S_{\Lambda_k} \\
      N^f_k & 0
   \end{pmatrix}&
   \begin{pmatrix}
      F_k N^f_k & S_{\Lambda_k} \\
      N^f_k & 0
   \end{pmatrix}^\top = 
   \begin{pmatrix}
      F_k P^f_k F_k^\top + \Lambda_k & F_k P^f_k \\
      P_k F_k^\top & P^f_k
   \end{pmatrix},
\end{split}
\end{equation}
By considering $L_k = D_k D_k^\top$ and applying~\eqref{eq:CCT} in~\eqref{eq:smooth-init-1} and~\eqref{eq:smooth-init-2} the square-root parameters can be computed. 
\end{proof}
Now we can describe the binary associative operator $\otimes$ for the smoothing step. In the standard version, the result of the associative smoothing operator is defined by $\{E_i,g_i,L_i\} \otimes \{E_j,g_j,L_j\} \coloneqq \{E_{ij},g_{ij},L_{ij}\}$~\cite[lemma 10]{sarkka2020temporal}. We now aim to find the square-root version of the smoothing combination terms, that is, $\{E_{ij}, g_{ij}, D_{ij}\}$, where
 \begin{equation}
            \begin{split}
                 \{E_i, g_i, D_i\} \otimes \{E_j, g_j, D_j\} &=  \{E_{ij}, g_{ij}, D_{ij}\}.
            \end{split}
            \end{equation}
\begin{lemma}[Combination step of square-root parallel RTS smoother] \label{lem:combination_smoother}
Table~\ref{tab:combination_smoother} presents square-root formulations of~\cite{sarkka2020temporal} in the combination step of parallel Kalman smoother.
\end{lemma}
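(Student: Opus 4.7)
The plan is to mirror the strategy used in the proof of Lemma~\ref{lem:sqrt_combination_filter}, but to exploit the fact that the smoothing associative operator~\cite[Lemma 10]{sarkka2020temporal} is structurally much simpler than the filtering one: recall that in the standard version one has $E_{ij}=E_i E_j$, $g_{ij} = E_i g_j + g_i$, and $L_{ij} = E_i L_j E_i^\top + L_i$. The first two identities do not involve the covariance factors at all, so their square-root counterparts in Table~\ref{tab:combination_smoother} are nothing but a direct transcription and require no separate argument.

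The only nontrivial part is therefore the identity $D_{ij} D_{ij}^\top = L_{ij}$. My plan is to substitute $L_i = D_i D_i^\top$ and $L_j = D_j D_j^\top$ into the covariance update to obtain
\begin{equation*}
L_{ij} = E_i D_j D_j^\top E_i^\top + D_i D_i^\top = (E_i D_j)(E_i D_j)^\top + D_i D_i^\top,
\end{equation*}
and then apply the triangularization identity~\eqref{eq:CCT} to the block matrix $C = \begin{pmatrix} E_i D_j & D_i \end{pmatrix}$. This immediately yields the proposed expression
\begin{equation*}
D_{ij} = \mathrm{Tria}\left(\begin{pmatrix} E_i D_j & D_i \end{pmatrix}\right),
\end{equation*}
which satisfies $D_{ij} D_{ij}^\top = L_{ij}$ by Property~\ref{prop:tria-prop-2} of the triangularization operator, and which is lower triangular by Property~1.

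Unlike the filtering combination step, no matrix inversion lemma manipulations are needed, because the smoothing operator is purely additive in the covariance factors and does not require downdates. Consequently, the main thing to verify is bookkeeping: that the block matrix $C$ has compatible dimensions (both $E_i D_j$ and $D_i$ are $n_x \times n_x$, so $C \in \mathbb{R}^{n_x \times 2 n_x}$ and the resulting $\mathrm{Tria}(C)$ is an $n_x \times n_x$ lower-triangular Cholesky factor), and that the $E_{ij}$ and $g_{ij}$ rows of Table~\ref{tab:combination_smoother} coincide with the standard equations. I do not anticipate any obstacle beyond this routine check.
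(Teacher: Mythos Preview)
Your proposal is correct and follows essentially the same approach as the paper: the paper's proof simply notes that it suffices to apply~\eqref{eq:CCT} to obtain the square root of $L_{ij} = E_i L_j E_i^\top + L_i$ with $L_i = D_i D_i^\top$ and $L_j = D_j D_j^\top$, which is exactly what you do. Your write-up is in fact more explicit than the paper's, including the dimension check and the observation that $E_{ij}$ and $g_{ij}$ carry over verbatim.
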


\begin{table}[tbhp]
\begin{center}
    \begin{tabular}{c|c} 
       \textbf{Standard version}~\cite[Lemma 10]{sarkka2020temporal}& \bf Square-root version  \\ [0.1cm]
       $E_{ij} = E_i E_j$, & $ E_{ij} = E_i E_j$,\\ [0.2cm]
        $g_{ij} = E_i g_j + g_i$, &  $g_{ij} = E_i g_j + g_i$,\\ [0.2cm]
        $L_{ij} = E_i L_j E_i^\top + L_i$, & $D_{ij}
    = \mathrm{Tria}\left( \begin{pmatrix} E_{i} D_j &  D_i \end{pmatrix} \right)$.
   \end{tabular}
   \caption{Combination step for parallel Kalman smoother.  On the left are covariance-based formulations, and on the right are the proposed square-root version.}
   \label{tab:combination_smoother}
\end{center}
\end{table}

\begin{proof}
   In Lemma~\ref{lem:combination_smoother}, it suffices to use~\eqref{eq:CCT} and find the square-root of $L_{ij} =  E_i L_j E_i^{\top} + L_i$, where $L_i = D_i D_i^\top$ and $L_j = D_j D_j^\top$.
\end{proof}
The smoothing results can then be computed by applying the parallel associative scan algorithm reversed in time. Algorithm~\ref{alg: parallel-smoothing-sqrt-std} presents the procedure of computing the initialization and combination of the parallel standard and square-root smoother.
\begin{algorithm}[t]
\caption{\textsc{Standard} or \textsc{Square-root} versions of parallel RTS smoother}\label{alg: parallel-smoothing-sqrt-std}
\begin{algorithmic}[1]
\renewcommand{\algorithmicrequire}{\textbf{Input:}}
\renewcommand{\algorithmicensure}{\textbf{Output:}}
\REQUIRE SSM with parameters in~\eqref{eq:linear-model}, $\Lambda_k = S_{\Lambda_k} S_{\Lambda_k}^\top$, and $\Omega_k = S_{\Omega_k} S_{\Omega_k}^\top$. Filtering results in  standard version, $\{m^f_{0:n}$, $P^f_{0:n}\}$, or square-root version, $\{m^f_{0:n}$, $N^f_{0:n}\}$.
\ENSURE Smoothing results $\{m^s_{0:n}, P^s_{0:n}\}$ or $\{m^s_{0:n}, N^s_{0:n}\}$
\STATE
\COMMENT {Initialization}
  \FOR[Compute in parallel]{$k\gets 0$ \textbf{to} $n$}
    \STATE {Set $a_k = (\{E_k, g_k, L_k \, \text{or} \, D_k\})_{k=1}^n $  using Lemma~\ref{lem:init_smoother} in \textsc{Standard} or \textsc{Square-root} versions.}
   \ENDFOR
\STATE{\COMMENT {Combination}}
\STATE {Set $(\{E_k, g_k, L_k \, \text{or} \, D_k\})_{k=1}^n = \mathrm{ReverseAssociativeScan}(\otimes,(a_k)_{k=0}^n)$, \\where $\otimes$ is defined in Lemma~\ref{lem:combination_smoother} in \textsc{Standard} or \textsc{Square-root} versions.}
\STATE{\COMMENT{Smoothing results}}
 \STATE{Extract $(m^s_{0:n}, P^s_{0:n} \, \text{or} \, N^s_{0:n}) = (g_{0:n}, L_{0:n} \, \text{or} \,  D_{0:n})$ in \textsc{Standard} or \textsc{Square-root} versions.}
\end{algorithmic}
\end{algorithm}

\section{Parallel standard and square-root iterated filter and smoother based on GSLR posterior linearization}
\label{sec:parallel-filter-smoother-gslr}
In this section, we will generalize the parallel iterated filtering and smoothing methodology of~\cite{yaghoobi2021parallel} to generalized statistical linear regression (GSLR). Then we will derive a square-root version of the parallel GSLR filtering and smoothing methods.

As shown in~\cite{garcia2016iterated,tronarp2018iterative}, a single iteration of the GSLR-based posterior linearization filtering and smoothing consists of:
\begin{enumerate}
\item Using the smoothing result from the previous iteration to linearize the dynamic and measurement models of the system at all time steps via GSLR.
\item Running a Kalman filter and smoother on the linearized system.
\end{enumerate}
In order to parallelize the above steps, we need to parallelize the linearization as well as the Kalman filter and smoother. Below, we provide details on how to find the linearized parameters for a general SSM~\eqref{eq:ss-model} in standard or square-root versions. We can then apply Algorithms~\ref{Alg:parallel-filtering-sqrt-std} and~\ref{alg: parallel-smoothing-sqrt-std} to obtain the parallel iterated method. 

\subsection{Standard and square-root forms of generalized statistical linear regression}
We now describe the procedure of finding a linearized representation of the SSM~\eqref{eq:ss-model} for standard and square-root versions. In order to perform the procedure iteratively, we consider a sequence of such representations in standard form as
\begin{subequations} \label{eq:linearized-model-at-time-i}
\begin{align}
        x_k &\approx F^{(i)}_{k-1} x_{k-1} + c^{(i)}_{k-1} + q^{(i)}_{k-1}, &&  \, q^{(i)}_{k} \sim \mathcal{N}(0, \Lambda^{(i)}_k), \\
        y_k &\approx H^{(i)}_k x_k + d^{(i)}_k + v^{(i)}_k, &&  \, v^{(i)}_{k} \sim \mathcal{N}(0, \Omega^{(i)}_k).
\end{align}
\end{subequations}
We define the parameters appearing at iteration $i$ as: 
\begin{equation}  \label{eq:linear-params}
   \Gamma_{1:n}^{(i)} = \{F_{0:n-1}^{(i)}, c_{0:n-1}^{(i)}, \Lambda_{0:n-1}^{(i)}, H_{1:n}^{(i)}, d_{1:n}^{(i)}, \Omega_{1:n}^{(i)}\}.
\end{equation}
Given the current set of linearized parameters $\Gamma_{1:n}^{(i)}$, we can then compute the parameters at the next iteration, $i+1$, by applying the GSLR method to the transition and observation densities $p(x_{k} \mid x_{x-1})$ and $p(y_k \mid x_k)$, respectively, over the current linearized estimate of the smoothing distribution marginals $p(x_k \mid y_{1:n}, \Gamma_{1:n}^{(i)})$. For all $k \geq 1$, we compute the following quantities in parallel:
\begin{subequations}\label{eq:linear-params-trans}
    \begin{align}
        \begin{split} \label{eq:linear-params-trans-1}
            F_{k-1}^{(i+1)} &= \mathbb{C}[\mathbb{E}[x_k \mid x_{k-1}],x_{k-1}]^\top \mathbb{V}[x_{k-1}]^{-1},
        \end{split}\\
        \begin{split}\label{eq:linear-params-trans-2}
            c_{k-1}^{(i+1)} &= \mathbb{E}[\mathbb{E}[x_k \mid x_{k-1}]] - F^{(i+1)}_{k-1} \mathbb{E}[x_{k-1}], 
        \end{split}\\
        \begin{split}\label{eq:linear-params-trans-3}
            \Lambda_{k-1}^{(i+1)} &= \mathbb{E}[\mathbb{V}[x_k\mid x_{k-1}]]+\mathbb{V}[\mathbb{E}[x_k\mid x_{k-1}]] - F^{(i+1)}_{k-1} \mathbb{V}[x_{k-1}] [F^{(i+1)}_{k-1}]^\top,
        \end{split}
    \end{align}
\end{subequations}
where the expectations are taken over $p(x_{k-1} \mid y_{1:n}, \Gamma_{1:n}^{(i)})$, and 
\begin{subequations} \label{eq:linear-params-obs}
    \begin{align}
        \begin{split}
             H_{k}^{(i+1)} &= \mathbb{C}[\mathbb{E}[y_k \mid x_{k}],x_{k}]^\top \mathbb{V}[x_k]^{-1},
        \end{split}\\
        \begin{split}
             d_{k}^{(i+1)} &= \mathbb{E}[\mathbb{E}[y_k \mid x_{k}]] - H^{(i+1)}_{k} \mathbb{E}[x_{k}],
        \end{split}\\
        \begin{split}
            \Omega_{k}^{(i+1)} &= \mathbb{E}[\mathbb{V}[y_k\mid x_k]]+\mathbb{V}[\mathbb{E}[y_k\mid x_k]]  - H^{(i+1)}_{k} \mathbb{V}[x_k] [H^{(i+1)}_{k}]^\top,
        \end{split}
    \end{align}
\end{subequations}
where the expectations are now taken over $p(x_{k} \mid y_{1:n}, \Gamma_{1:n}^{(i)})$. Importantly, the approximate smoothing marginals are Gaussian and are therefore fully represented by their corresponding means $m^{s, (i)}_{0:n}$ and covariances $P^{s, (i)}_{0:n}$. While the conditional expectations and covariances involved in~\eqref{eq:linear-params-trans} and~\eqref{eq:linear-params-obs} are not necessarily tractable, we can then approximate them by using a Taylor expansion or sigma-point-based integration methods. Crucially, this results in a method that is fully independent across all steps $k$, so that it can be fully parallelized across all steps.

We now consider the problem of deriving a square-root version of the linearized parameters $\Gamma$ in~\eqref{eq:linear-params}. We define the square-root linearization parameters, $\Gamma^{\textrm{sqrt}}$, in each iteration, $i$, as: 
\begin{equation} \label{eq:Gamma-s}
    \Gamma^{\textrm{sqrt}, (i)}_{1:k} = \{F_{0:k-1}^{(i)}, c_{0:k-1}^{(i)}, S_{\Lambda_{0:k-1}}^{(i)}, H_{1:k}^{(i)}, d_{1:k}^{(i)}, S_{\Omega_{1:k}}^{(i)}\},
\end{equation}
where $S_{\Lambda_k}$ and $S_{\Omega_k}$ are the square-root matrices of $\Lambda_k$ and $\Omega_k$, respectively. Below, we derive the parameters of $\Gamma^{\textrm{sqrt}}$, using Taylor series expansion and sigma-point methods. In both methods, we assume that the Cholesky decompositions of the conditional covariances of transition and observation densities, $\mathbb{V}[x_k \mid x_{k-1}]$ and  $\mathbb{V}[y_k \mid x_k]$, defined as $\mathbb{S}_x[x_k \mid x_{k-1}]$ and $\mathbb{S}_y[y_k \mid x_k]$, respectively, are tractable.

\subsection{Square-root iterated Taylor series expansion}\label{subsec:sqrt-taylor}
Iterated Taylor series expansions use first-order Taylor approximations to linearize nonlinear functions. This linearization occurs around the previous smoothing mean $m^s_k$, which is the best available result~\cite{garcia2016iterated}. Using this method, the conditional mean and square root of the covariance in~\eqref{eq:linear-params-trans} can be approximated as follows
\begin{equation} \label{eq:taylor-approx-cm-cv}
    \begin{split}
       \mathbb{E}[x_k \mid x_{k-1}] &\approx \mathbb{E}[x_k \mid x_{k-1}](m^s_{k-1}) + {\nabla_{x}}\mathbb{E}[x_k \mid x_{k-1}](m^s_{k-1})(x_{k-1} - m^s_{k-1}),\\
       \mathbb{S}_x[x_k \mid x_{k-1}] &\approx \mathbb{S}_x[x_k \mid x_{k-1}](m^s_{k-1}).
    \end{split}
\end{equation}
Then, by applying~\eqref{eq:taylor-approx-cm-cv} in~\eqref{eq:linear-params-trans}, the square-root version of the first-order Taylor series expansion method will be as follows:
\begin{equation*}
\begin{split}
    F_{k-1} &= \nabla_x\mathbb{E}[x_k \mid x_{k-1}]( m^s_{k-1}),\quad c_{k-1} = \mathbb{E}[x_k \mid x_{k-1}](m^s_{k-1}) - F_{k-1} \mathbb{E}[x_{k-1}], \\
    S_{{\Lambda}_{k-1}} &= \mathbb{S}_x[x_k \mid x_{k-1}](m^s_{k-1}).
\end{split}
\label{eq:square_root_taylor_FcS}%
\end{equation*}
The same procedure can be applied to find the parameters $(H_k, d_k, S_{\Omega})$. The conditional mean and square-root covariance needed for~\eqref{eq:linear-params-obs} are:
\begin{equation} \label{eq:taylor-approx-cm-cv-obs}
    \begin{split}
       \mathbb{E}[y_k \mid x_k] &\approx \mathbb{E}[y_k \mid x_k](m^s_k) + {\nabla_{x}}\mathbb{E}[y_k \mid x_k](m^s_k)(x_k - m^s_k),\\
       \mathbb{S}_x[y_k \mid x_k] &\approx \mathbb{S}_x[y_k \mid x_k](m^s_k).
    \end{split}
\end{equation}
Now by substituting~\eqref{eq:taylor-approx-cm-cv-obs} in~\eqref{eq:linear-params-obs}, the square-root linearized parameters are: 
\begin{equation*}
\begin{split}
    H_k &= \nabla_x\mathbb{E}[y_k \mid x_k](m^s_k),\quad d_k = \mathbb{E}[y_k \mid x_k](m^s_k) - H_k \mathbb{E}[x_k],\quad S_{{\Omega}_{k}} = \mathbb{S}_y[y_k \mid x_k](m^s_k).
\end{split}
\label{eq:square_root_taylor}
\end{equation*}

\subsection{Square-root iterated sigma-point linearization}\label{subsec:sqrt-sigma}
In general, sigma-point linearization methods aim to approximate the expectation of a nonlinear function using a set of sigma points, $\chi_n$, and their associated weights, $w_n$ as~\cite{tronarp2018iterative}:
\begin{equation} \label{eq:approx_sigmapoint}
    \begin{split}
       \mathbb{E}[y]& \approx \sum_n w_n \mathbb{E}[y \mid x=\chi_{n}].
    \end{split}
\end{equation}
When the random variable $x$ is Gaussian, a wide range of sigma-point methods is available~\cite{sarkka2013bayesian} for this purpose.

 Here we consider $s$ sigma points $\mathcal{X}_1,\ldots,\mathcal{X}_s$, their associated positive mean weights $w^m _1,\ldots,w^m _s$, and positive covariance weights $w^c _1,\ldots,w^c _s$. The idea of the square-root iterated sigma-points linearization method is to obtain the required expectations according to posterior means and square-root covariances of the smoothing density $p(x_k \mid y_{1:n})$~\cite{ arasaratnam2009cubature,arasaratnam2011cubature}. It is worth noting that there are different choices for sigma points and weights which lead to different sigma-point methods~\cite{kokkala2015sigma}. 

First, we want to compute the parameters $(F_{k-1},c_{k-1},S_{\Lambda_{k-1}})$. To this end, we need both the mean and the square root of the covariance of the smoothing density, $m^s_k$ and $N^s_k$. Then, we obtain the transformed sigma-point as $\mathcal{Z}_{x,i} = \mathbb{E}[x_k \mid x_{k-1}](\mathcal{X}_i)$ and consequently define:
\begin{equation} \label{eq:z-bar}
    \bar{z}_x = \mathbb{E}[\mathbb{E}[x_k \mid x_{k-1}]] \approx \sum_{i=1}^{s} w^m_i \mathcal{Z}_{x,i}.
\end{equation}
Then, by applying~\eqref{eq:approx_sigmapoint} in~\eqref{eq:linear-params-trans} and using~\eqref{eq:z-bar} we have:
\begin{equation*} \label{eq:F-c-sigma-point-params}
\begin{split}
F_{k-1} &\approx \sum_{i=1}^{s} w^c_i (\mathcal{Z}_{x,i}- \bar{z}_x)(\mathcal{X}_i - m^s_{k-1})^\top (N^s_{k-1} [N^s_{k-1}]^\top)^{-1},\quad c_{k-1} \approx \bar{z}_x - F_{k-1} m^s_{k-1}.
\end{split}
\end{equation*}
In order to find the square root of the covariance as per~\eqref{eq:linear-params-trans-3} we define: 
\begin{equation}
\begin{split}
    Z_x &= 
    \begin{bmatrix}
        \sqrt{w_1^c} (\mathcal{Z}_{x,1} - \bar{z}_x) & \cdots & \sqrt{w_s^c} (\mathcal{Z}_{x,s} - \bar{z}_x)
    \end{bmatrix}.
\end{split}
\label{eq:transformed_z}
\end{equation}
Now, by defining $\mathcal{S}_{x,i} = \mathbb{S}_x[x_k \mid x_{k-1}](\mathcal{X}_i)$, for $i=1, \ldots,s$, and applying~\eqref{eq:approx_sigmapoint} in~\eqref{eq:linear-params-trans-3}, the latter equation can be approximated as:
\begin{equation} \label{eq:approx-Lambda}
\begin{split}
    \Lambda_{k-1} &\approx   \sum_{i=1}^{s} \left( w^c_i \mathcal{S}_{x,i}  \mathcal{S}_{x,i}^\top\right) + Z_x Z_x^\top - F_{k-1} N^s_{k-1} [N^s_{k-1}]^\top F_{k-1}^\top.
\end{split}
\end{equation}
To find the square root of $\Lambda_k$, first, we find the square root of the first two parts on the right-hand side of~\eqref{eq:approx-Lambda} as follows:
\begin{equation} \label{eq:lambda-prime}
S_{\Lambda_{k-1}}^\prime=\mathrm{Tria}(\sqrt{w^c_1} \mathcal{S}_{x,1}, \ldots, \sqrt{w^c_s} \mathcal{S}_{x,s}, Z_x). 
\end{equation}
Then, using $\mathrm{DownDate}$ function (see Section~\ref{subec:background-tria}), the square-root of $\Lambda_{k-1}$ will be
\begin{equation}\label{eq:sqr-lambda-sigma-point}
 \begin{split}
S_{\Lambda_{k-1}} \approx \mathrm{DownDate}(S_{\Lambda_{k-1}}^\prime, F_{k-1} N^s_{k-1}).
\end{split}
\end{equation}
The same procedure can be applied in order to obtain a square-root version of the linearization parameters $(H_k, d_k, S_{\Omega_k})$ of~\eqref{eq:linear-params-obs}. To do so, we consider $ \mathcal{Z}_{y,i} = \mathbb{E}[y_k \mid x_k](\mathcal{X}_{i})$, and $\mathcal{S}_{y,i} = \mathbb{S}[y_k \mid x_k](\mathcal{X}_{i})$ for $i=1, \ldots,s$. Then, by using the mean, $\bar{z}_y$, and defining $Z_y$ as
\begin{equation*}
     \bar{z}_y = \mathbb{E}[\mathbb{E}[y_k \mid x_k]] \approx \sum_{i=1}^{s} w^m_i \mathcal{Z}_{y,i}, \quad Z_y = \begin{bmatrix}
     \sqrt{w_1^c}(\mathcal{Z}_{y,1} - \bar{z}_y) & \cdots& \sqrt{w_s^c}(\mathcal{Z}_{y,s} - \bar{z}_y)
     \end{bmatrix},
\end{equation*}
 we can compute the linearization parameters $(H_k, d_k, S_{\Omega_k})$ as:
\begin{equation}\label{eq:sqr-obs-linearized-param-sigma-point}
 \begin{split}
 H_{k} &\approx  \sum_{i=1}^{s} w^c_i (\mathcal{Z}_{y,i}- \bar{z}_y)(\mathcal{X}_i - m^s_k)^\top (N^s_{k} [N^s_k]^\top)^{-1}, \quad d_{k}  \approx \bar{z}_y - H_{k} m^s_{k},\\
S_{\Omega_k} &\approx \mathrm{DownDate}(\mathrm{Tria}(\sqrt{w^c_1} \mathcal{S}_{y,1}, \cdots, \sqrt{w^c_s} \mathcal{S}_{y,s}, Z_y), H_k N^s_k).
\end{split}
\end{equation}

\subsection{Parallel standard and square-root algorithms for general state-space models}
We can now find standard and square-root  forms of the parallel iterated filter and smoother based on GSLR linearization. To be specific, a single iteration of these methods consists of the following two steps:
\begin{enumerate}
    \item Using standard or square-root results of the smoother from the previous iteration in GSLR to linearize the dynamic and measurement models of the nonlinear system on all time steps $k = 1,\ldots,n$. The linearization can be performed based on Taylor series expansion or sigma-point methods. 
    \item Running the standard or square-root versions of Kalman filter and smoother on the linearized system in Algorithms~\ref{Alg:parallel-filtering-sqrt-std} and~\ref{alg: parallel-smoothing-sqrt-std}.
\end{enumerate}
Defining the standard and square-root methods allows us to iterate the filtering and smoothing algorithms until convergence. This recovers the optimal nominal trajectory, defined here as $\mathfrak{n}^*_{0:n} \coloneqq \{m^s_{0:n}, P^s_{0:n}\}$ for the standard method and $\mathfrak{n}^*_{0:n} = \{m^s_{0:n}, N^s_{0:n}\}$ for the square-root method. This is summarized in Algorithm~\ref{alg:fixed_point_iteration}.

\begin{algorithm}[tbhp]
\caption{Parallel standard or square-root iterated filter and smoother based on GSLR linearization}\label{alg:fixed_point_iteration}
\begin{algorithmic}[1]
\renewcommand{\algorithmicrequire}{\textbf{Input:}}
\renewcommand{\algorithmicensure}{\textbf{Output:}}
\REQUIRE Initial smoothing results for the whole trajectory at iteration $i=0$ in standard version, $\{m^{s, (0)}_{0:n}, P^{s, (0)}_{0:n}\}$, or square-root version, $\{m^{s, (0)}_{0:n}, N^{s, (0)}_{0:n}\}$, and the number of iterations $M$.
\ENSURE Smoothing results at the final iteration in the standard version,  $\{m^{s, (M)}_{0:n}, P^{s, (M)}_{0:n}\}$, or square-root version $\{m^{s, (M)}_{0:n}, N^{s, (M)}_{0:n}\}$.
\FOR{$i\gets 1$ \textbf{to} $M$}
\STATE
    \COMMENT{Compute linearization parameters}
    \FOR[Compute in parallel]{$k\gets 1$ \textbf{to} $n$}
    \IF{\textsc{Standard version}}
    \STATE {Compute $F_{k-1}^{(i)}$, $c_{k-1}^{(i)}$, and $\Lambda_{k-1}^{(i)}$ using~\eqref{eq:linear-params-trans} and $\{m^{s,(i-1)}_{k-1}, P^{s,(i-1)}_{k-1}\}$.}
		\STATE {Compute $H_k^{(i)}$, $d_k^{(i)}$, and $\Omega_k^{(i)}$ using~\eqref{eq:linear-params-obs} and $\{m^{s,(i-1)}_{k},P^{s,(i-1)}_k\}$.}
    \ELSIF{\textsc{square-root version}}
        \STATE {Compute $\Gamma^{\textrm{sqrt}, (i)}_{1:k} = \{F_{0:k-1}^{(i)}, c_{0:k-1}^{(i)}, S_{\Lambda_{0:k-1}}^{(i)}, H_{1:k}^{(i)}, d_{1:k}^{(i)}, S_{\Omega_{1:k}}^{(i)}\}$ by using a Taylor series expansion (Sec.~\ref{subsec:sqrt-taylor}) or sigma-point methods (Sec.~\ref{subsec:sqrt-sigma}).} %
    \ENDIF
    \ENDFOR
    \STATE {Run standard or square-root version of Algorithm~\ref{Alg:parallel-filtering-sqrt-std} on the linearized model to compute $\{m^{f,(i)}_{0:n}, N^{f,(i)}_{0:n}\}$.}
    \STATE {Run standard or square-root version of Algorithm~\ref{alg: parallel-smoothing-sqrt-std} on the linearized model to compute $\{m^{s,(i)}_{0:n}, N^{s,(i)}_{0:n}\}$.}
\ENDFOR
\end{algorithmic}
\end{algorithm}

\section{Parameter estimation}
\label{sec:sys-identification}

    In order to derive parallel state estimation methods in the previous sections, we have assumed that the SSM from which the observations $y_{1:n}$ were generated was known exactly. We now consider the case when the model at hand depends on a parameter $\theta$ that we want to estimate:
    \begin{align}
        p(x_k \mid x_{k-1}, \theta), \quad
        p(y_k \mid x_k, \theta), \quad
        p(x_0 \mid \theta).
    \end{align}
    A typical way to do so is via the marginal log-likelihood function, and its derivative with respect to the parameter $\theta$: the score function. In this section, we discuss how both of these can be computed in parallel. A particular attention is given to the score function as computing it naively would incur a memory cost scaling linearly with the number of iterations in Algorithm~\ref{alg:fixed_point_iteration}.
    For the sake of conciseness, in the remainder of this section, we will only consider the covariance formulation of our proposed method, that is, the standard version of Algorithm~\ref{alg:fixed_point_iteration}. However, the analysis remains unchanged if one considers instead the square-root formulation developed in Sections~\ref{subsec:parallel-sqrt-filter},~\ref{subsec:parallel-sqrt-smoother},~\ref{subsec:sqrt-taylor}, and~\ref{subsec:sqrt-sigma}, provided that the covariance operations are all replaced by their square-root equivalents. 
    Similarly, for notational simplicity, we will only consider the case when $n_x = n_y = \mathrm{dim}(\theta) =1$, the general case follows from taking element-wise derivatives instead. 

\subsection{Parallel computation of the (pseudo) log-likelihood}\label{subsec:log-lik}
    For general SSMs, the log-likelihood $\log p(y_{1:n}) = p(y_{1:n} \mid \theta)$ can be computed iteratively using 
    \begin{equation}
        \begin{split}
        \log p(y_{1:n}) 
            &= \sum_{k=1}^n \log \int_{x_k} p(y_k \mid x_k) p(x_k \mid y_{1:k-1}) \mathrm{d}x_k,
        \end{split}\label{eq:log-likelihood}
    \end{equation}
    where we drop the dependency on $\theta$ for simplicity. 
    In the case of linear Gaussian SSMs (LGSSMs), the integrals in~\eqref{eq:log-likelihood} are available in closed form, and we can compute them using the observation predictive means and covariances. 
    As was suggested in~\cite{sarkka2020temporal}, for LGSSMs, this formulation allows to compute the log-likelihood $\log p(y_{1:n})$ in parallel by first computing the filtering distribution at each time step $p(x_k \mid y_{1:n}) = N(x_k; m^f_k, P^f_k)$, then all the terms in~\eqref{eq:log-likelihood} $\ell_k = \log N(y_k; m_k^y, P_k^y)$, where $m_k^y$ and $P_k^y$ are the predictive mean and covariance of $y_k$, and summing them for all $k$'s. 
    The span complexity of computing the filtering distribution at each time step is $O(\log(n))$~\cite{sarkka2020temporal}, the span complexity of computing all the terms $\ell_k$ given the filtering distribution is $O(1)$, and the span complexity of summing all these terms is $O(\log(n))$ too so that the total span complexity of computing the log-likelihood in LGSSMs is $O(\log(n))$. 
    
    In the context of general SSMs, however, we cannot usually obtain analytical expressions for the integrals appearing in~\eqref{eq:log-likelihood}. 
    Instead, these can be approximated by leveraging the same linearization rules that were used to compute the filtering estimates. 
    Specifically, all the terms in~\eqref{eq:log-likelihood} can be approximated from the linearized system~\eqref{eq:linearized-model-at-time-i}, producing a pseudo-log-likelihood $\hat{\ell}(\theta)$ which is an estimate of the log-likelihood $\log p(y_{1:n} \mid \theta)$ of the system. 
    As discussed in Section~\ref{sec:parallel-filter-smoother-gslr}, the linearized parameters of the system~\eqref{eq:linearized-model-at-time-i} are computed based on either iterated Taylor or sigma-points approximations in span complexity $O(\log(n))$, resulting in a total span complexity of $O(\log(n))$ for computing the pseudo-log-likelihood, for both our proposed square-root and standard formulations. %
\subsection{Constant-memory parallel computation of the score function}\label{subsec:gradient-estimation}
    Because the pseudo-log-likelihood $\hat{\ell}(\theta)$ of the nonlinear SSM~\eqref{eq:ss-model} is the log-likelihood $\log p(y_{1:n} \mid \Gamma_{1:n})$ of the linearized model~\eqref{eq:linearized-model-at-time-i}, one could apply the reverse accumulation chain rule, for example, by using automatic differentiation libraries, in order to compute the gradient of $\hat{\ell}$. This was, for example, done in~\cite{corenflos2021gaussian} in the linear case to learn parameters of temporal Gaussian processes. 
    
    However, while this method would result in an efficient algorithm to compute the score function of LGSSMs, it would suffer from several drawbacks if applied to Gaussian-approximated nonlinear SSMs directly. 
    Indeed, in this case, the linearization parameters $\Gamma_k$ depend on the final nominal trajectory $\mathfrak{n}^*_{0:n}$  used during the linearization step. 
    In turn, the final nominal trajectory depends on the previously estimated linearized parameters as well as $\theta$, forming a chain of dependency down to the initial nominal trajectory.
    Furthermore, in order to proceed with reverse accumulation, all the intermediary results involved in computing the pseudo-log-likelihood need to be stored. 
    Because the total number of iterations $M$ required for the nominal trajectory to converge is not necessarily known in advance, the memory cost of naively ``unrolling'' the convergence loop would be unknown at compilation time. 
    However, thanks to the fixed-point structure of Algorithm~\ref{alg:fixed_point_iteration}, we can derive a constant-memory algorithm to compute the gradient of quantities depending on the filtering and smoothing distributions (which include the pseudo-log-likelihood in particular).
    
    When using the mean-covariance approach of Section~\ref{sec:parallel-filter-smoother-gslr}, computing the gradient of $\hat{\ell}$ consists in applying the chain rule to $\log p$:
        \begin{equation}
        \begin{split}
            \frac{\mathrm{d}\hat{\ell}}{\mathrm{d}{\theta}} 
            &= \sum_{k=1}^n \bigg[\frac{\mathrm{d}\hat{\ell}}{\mathrm{d}{F_{k-1}}}\bigg(\frac{\partial F_{k-1}}{\partial \theta} + \frac{\partial F_{k-1}}{\partial m^s_{k-1}} \frac{\partial m^s_{k-1}}{\partial \theta} + \frac{\partial F_{k-1}}{\partial P^s_{k-1}} \frac{\partial P^s_{k-1}}{\partial \theta} \bigg) + \ldots \bigg].
        \end{split}
        \label{eq:gradient}
    \end{equation}
    To compute the full derivative, we need to compute the following quantities
    \begin{enumerate}
        \item $\frac{\partial F_{k-1}}{\partial \theta}$, $\frac{\partial F_{k-1}}{\partial m^s_{k-1}}$, $\frac{\partial F_{k-1}}{\partial P^s_{k-1}}$: this is easily done by differentiating through the Taylor or sigma-points linearization algorithms.
        \item $\frac{\partial m^s_{k-1}}{\partial \theta}, \frac{\partial P^s_{k-1}}{\partial \theta}$:  naively, this would require applying the chain rule through all the steps of the convergence loop in Algorithm~\ref{alg:fixed_point_iteration}.
    \end{enumerate}
    Thankfully, $\mathfrak{n}^*_{0:n}$ verifies the fixed point identity $\mathfrak{n}^*_{0:n} = \mathrm{KS}(\mathfrak{n}^*_{0:n}, y_{1:n}, \theta)$, where $\mathrm{KS}$ is the Kalman smoother operator given by Algorithm~\ref{alg:fixed_point_iteration}. Therefore, we can leverage the implicit function theorem as in~\cite{christianson1994fixedpoint} to show that
    \begin{equation}
        \begin{split}
        \frac{\mathrm{d}m^s_{0:n}}{\mathrm{d}\theta} &= \frac{\partial \mathrm{KS}^{m^s}_{0:n}}{\partial m^s_{0:n}} \frac{\mathrm{d}m^s_{0:n}}{\mathrm{d}\theta} + \frac{\partial \mathrm{KS}^{m^s}_{0:n}}{\partial P^s_{0:n}} \frac{\mathrm{d}P^s_{0:n}}{\mathrm{d}\theta} + \frac{\partial \mathrm{KS}^{m^s}_{0:n}}{\partial \theta},\\
        \frac {\mathrm{d}P^s_{0:n}}{\mathrm{d}\theta}
        &= \frac{\partial \mathrm{KS}^{P^s}_{0:n}}{\partial P^s_{0:n}} \frac{\mathrm{d}P^s_{0:n}}{\mathrm{d}\theta} + \frac{\partial \mathrm{KS}^{P^s}_{0:n}}{\partial m^s_{0:n}} \frac{\mathrm{d}m^s_{0:n}}{\mathrm{d}\theta} + \frac{\partial \mathrm{KS}^{P^s}_{0:n}}{\partial \theta},
        \end{split}
    \label{eq:implicity_function_theorem}
    \end{equation}
    where $\mathrm{KS}^{m^s}_k$ and $\mathrm{KS}^{P^s}_k$ are the mean and covariance outputs of the smoother operator $\mathrm{KS}$ at time $k$, respectively. This ensures that, for any $k=0, \ldots, n$, $\frac{\mathrm{d}m^s_k}{\mathrm{d}\theta}$ and $\frac{\mathrm{d}P^s_k}{\mathrm{d}\theta}$ verify a linear fixed-point equation, which, provided that the operator norm of all $\frac{\partial \mathrm{KS}}{\partial m^s_k}$, $\frac{\partial \mathrm{KS}}{\partial P^s_k}$ is less than $1$, admits a unique solution which can be computed by iterating it.
    We can therefore replace the memory-unbounded reverse accumulation calculation of the gradient of the optimal nominal trajectory with a memory-constant iterative procedure. 
    The full procedure is summarized in Algorithm~\ref{alg:fixed_point_iteration_for_gradient}. 
    \begin{remark}
         It is worth noting that~\eqref{eq:implicity_function_theorem} and the related routine in Algorithm~\ref{alg:fixed_point_iteration_for_gradient} \textit{a priori} seem to have a $O(T^2)$ memory cost due to the apparent need of computing all the different combinations of derivatives. However, automatic differentiation libraries typically require manipulating vector-Jacobian (or Jacobian-vector) products rather than full Jacobians~\cite{jax2018github}, so that the actual implementation of Algorithm~\ref{alg:fixed_point_iteration_for_gradient} only incurs a memory cost of $O(T)$, in line with that of computing the log-likelihood in the first place.
    \end{remark}
   
    \begin{algorithm}[!t]
    \caption{Fixed-point algorithm for computing the gradient of the covariance-based nominal trajectory w.r.t. the state-space model parameters $\theta$ at convergence.}
    \label{alg:fixed_point_iteration_for_gradient}
        \begin{algorithmic}[1]
            \renewcommand{\algorithmicrequire}{\textbf{Input:}}
            \renewcommand{\algorithmicensure}{\textbf{Output:}}
            \REQUIRE Optimal trajectory $\mathfrak{n}^*_{0:n} = \{m^s_{0:n}, P^s_{0:n}\}$ computed with Algorithm~\ref{alg:fixed_point_iteration} for the parameter $\theta$.
            \ENSURE  $\left\{\frac{\mathrm{d}m^s_{k}}{\mathrm{d}\theta} , \, \frac {\mathrm{d}P^s_{k}}{\mathrm{d}\theta};\, k=0, \ldots, n\right\}$.
                \STATE Initialize $\frac{\mathrm{d}m^s_{k}}{\mathrm{d}\theta}, \, \frac {\mathrm{d}P^s_{k}}{\mathrm{d}\theta}$ for all $k=0, \ldots, n$
                \STATE Compute $\frac{\partial \mathrm{KS}^{m^s}_{0:n}}{\partial m^s_{0:n}}$, $\frac{\partial \mathrm{KS}^{m^s}_{0:n}}{\partial \theta}$, $\frac{\partial \mathrm{KS}^{P^s}_{0:n}}{\partial P^s_{0:n}}$, and $\frac{\partial \mathrm{KS}^{P^s}_{0:n}}{\partial \theta}$ at the optimal nominal trajectory $\mathfrak{n}^*_{0:n}$ and $\theta$
                \STATE Iterate~\eqref{eq:implicity_function_theorem} until convergence.
            \end{algorithmic}
    \end{algorithm}
    When $\mathfrak{n}^*$ has been computed, all the operations happening within the loop of Algorithm~\ref{alg:fixed_point_iteration_for_gradient} are fully parallelizable, apart from computing the terms $\frac{\partial \mathrm{KS}^{m^s}_{0:n}}{\partial m^s_{0:n}}$ and $\frac{\partial \mathrm{KS}^{m^s}_{0:n}}{\partial m^s_{0:n}}$, and the final sum of the incremental log-likelihood terms, which all have span complexity $O(\log(n))$ as they share the same computational graph as the one used to compute the smoothing solution itself. The total span complexity of computing the (pseudo) score function is, therefore, $O(\log(n))$.

\section{Experimental results}
\label{sec:experiment}

In this section, we empirically demonstrate the performance of our proposed framework. For this, we consider a population model~\cite{fasiolo2016comparison, tronarp2018iterative} and a coordinated turn (CT) model~\cite{bar2004estimation, roth2014ekf}. We aim to show the effectiveness of our methods in terms of computational complexity, robustness, and parameter estimation. All the experiments are conducted using the NVIDIA\textsuperscript{\textregistered} GeForce RTX 3080 Ti 12 GB with 10240 CUDA cores, and leveraging the Python library JAX~\cite{jax2018github} for parallelization and automatic differentiation support. The code to reproduce the results can be found at \url{https://github.com/EEA-sensors/sqrt-parallel-smoothers/}.

\subsection{Parallel inference in a general SSM}\label{subsec:speed-general}
We now evaluate the performance of the general parallel method for a fully nonlinear non-Gaussian SSM. This generalizes the experiment done in~\cite{yaghoobi2021parallel} which was only concerned with additive noise models. We consider the stochastic Ricker model, described in~\cite{fasiolo2016comparison}, and also used in ~\cite{tronarp2018iterative} as
\begin{equation}
    \begin{split}
        x_k &= \log(a) + x_{k-1} - \exp(x_{k-1}) + q_k, \quad q_k \sim N(0, Q_k)\\
        y_k \mid x_k &\sim \mathrm{Poisson}(b \exp(x_k)),  \\
        x_0 &\sim \delta(x_0 - \log(c)).
    \end{split}
\end{equation}
where $x_k$ is a one-dimensional state, and the measurement $y_k$ is conditionally distributed according to a Poisson distribution with rate $b \exp(x_k)$. For this experiment, the  system parameters are chosen to be $a=44.7$, $b=10$, $c=7$, and $Q_k=0.3^2$.

Our aim is to compare the performance of the covariance-based sequential and parallel methods in the sense of time span complexity. To this end, we perform $M=100$ iterations of the iterated extended (IEKS) and cubature Kalman smoothing (ICKS) on the model for different numbers of observations.

The average (over 100 repetitions) run times of sequential and parallel IEKS and ICKS methods are reported in Fig.~\ref{fig:gpu-poisson-model}. This figure shows that the parallel methods are generally faster compared to their sequential counterpart. However, the differences are less prominent for shorter time horizons (say, $n = 10$), and more significant with long time horizons. The run times of sequential IEKS and ICKS are increasing linearly with respect to the number of time steps, whereas the parallel IEKS and ICKS exhibit a much better complexity scaling. %
\subsection{Computational complexity of parallel square-root methods}\label{subsec:coordinate-complexity}
In this experiment, the goal is to show the effectiveness of our proposed square-root method in terms of computational complexity in a high-dimensional state-space setting. We consider the problem of tracking a maneuvering target with a CT model~\cite{bar2004estimation} given bearing-only sensor measurements. Other uses of this model can be found, for example, in~\cite{arasaratnam2009cubature} and~\cite{roth2014ekf}.
The state is a 5-dimensional vector $x = [p_x, p_y, \dot{p}_x, \dot{p}_y, \omega]^\top$ containing the position $(p_x, p_y)$, the speed $(\dot{p}_x, \dot{p}_y)$, and the turn rate $\omega$ of the target. The bearing is measured by two sensors located at known positions $(s_x^i, s_y^i)$ for $i=1,2$. We follow the parameterization of~\cite{arasaratnam2009cubature}, and take the noise parameters and time step to be $q_1 = 0.1$, $q_2 = 0.1$, and $dt = 0.01$.

Similarly as in Section~\ref{subsec:speed-general}, in this experiment, we consider observation sets with size varying from $10$ to $5\,000$. Fig.~\ref{fig:gpu-bearing-only} shows the average run time (computed over 100 runs) of running $M=10$ iterations of the iterated smoothing methods. As expected, the parallel square-root method is faster than its sequential square-root method. 

\begin{figure}[tbhp]
    \centering
    \subfloat[Population model\label{fig:gpu-poisson-model}.]{
      \resizebox{0.45\columnwidth}{!}{\begin{tikzpicture}
\begin{axis}[
    legend style={
        nodes={scale=0.75, transform shape},
        at={(0,1)},
        anchor=north west},
    grid=both,
    xmode=log, 
    ymode=log,
    xmin=10, xmax=130,
    ymin=0, ymax=5,
    extra x ticks={3584},
    extra x tick style={%
        grid=major,
    },
    extra x tick labels={n cores},
    xlabel={$n$ number of time steps},
    ylabel={run time (in seconds)}
    ]
\addplot[black, dashed, line width=1pt] table [x=time_steps, y=gpu_IEKS_std_seq_time, col sep=comma]{figures/gpu-population-model.csv};
\addplot[black, line width=1pt] table [x=time_steps, y=gpu_IEKS_std_par_time, col sep=comma]{figures/gpu-population-model.csv};
\addplot[gray, dashed, line width=1pt] table [x=time_steps, y=gpu_IPLS_std_seq_time, col sep=comma]{figures/gpu-population-model.csv};
\addplot[gray, line width=1pt] table [x=time_steps, y=gpu_IPLS_std_par_time, col sep=comma]{figures/gpu-population-model.csv};
\legend{sequential IEKS, parallel IEKS, sequential ICKS, parallel ICKS}
\end{axis}
\end{tikzpicture}}}
    \subfloat[Coordinated turn model\label{fig:gpu-bearing-only}.]{
        \resizebox{0.45\columnwidth}{!}{\begin{tikzpicture}
\begin{axis}[
    legend style={
        nodes={scale=0.75, transform shape},
        at={(0, 1)},
        anchor=north west},
    grid=both,
    xmode=log, 
    ymode=log,
    xmin=10, xmax=4096,
    ymin=0, ymax=2.5e2,
    xlabel={$n$ number of time steps},
    ylabel={run time (in seconds)},
    ]
\addplot[gray, dashed, line width=1pt] table [x=time_steps, y=gpu_extended_sqrt_seq_mean_time, col sep=comma]{figures/rtx-GPU-all-methods.csv};
\addplot[gray, line width=1pt] table [x=time_steps, y=gpu_extended_sqrt_par_mean_time, col sep=comma]{figures/rtx-GPU-all-methods.csv};
\addplot[black, dashed, line width=1pt] table [x=time_steps, y=gpu_cubature_sqrt_seq_mean_time, col sep=comma]{figures/rtx-GPU-all-methods.csv};
\addplot[black, line width=1pt] table [x=time_steps, y=gpu_cubature_sqrt_par_mean_time, col sep=comma]{figures/rtx-GPU-all-methods.csv};

\legend{sequential IEKS, parallel IEKS, sequential ICKS, parallel ICKS}

\end{axis}
\end{tikzpicture}}}
    \caption{GPU run time comparison of (a) the standard parallel and sequential methods of IEKS and ICKS methods for population model and (b) the square-root parallel and sequential methods of IEKS and ICKS methods for coordinated turn model.}
    \label{fig:GPU_std_sqrt}
\end{figure}
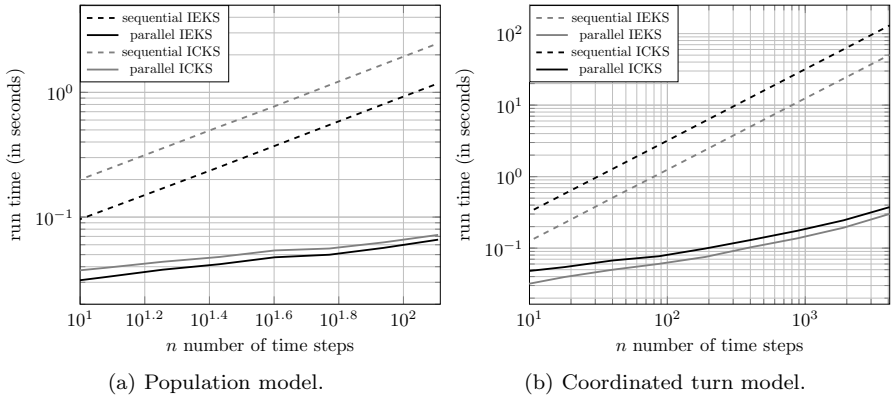

\subsection{Robustness of parallel square-root methods}
In this part, we study the same CT model as described in Section~\ref{subsec:coordinate-complexity} to show how the square-root method improves the stability of the algorithm, as compared to the covariance-based one, when using $32$-bit floating-point numbers. In order to assess the stability of the method, for each chosen number of observations, we simulate $15$ independent sets of observations and run square-root and covariance-based parallel iterated extended and cubature Kalman smoother for $M=20$ iterations. We then report the divergence rate for a given method and number of observations, as defined by the percentage of resulting NaN (not a number) log-likelihood estimates. The result is shown in Fig.~\ref{fig:ell}, and shows that square-root methods achieve higher stability in comparison to the standard method. 

\begin{figure}[tbhp]
    \centering
    \subfloat[Divergence rate of IEKS method\label{fig:ell-IEKS}.]{
      \resizebox{0.45\columnwidth}{!}{\begin{tikzpicture}
\begin{axis}[
    legend style={
        nodes={scale=0.75, transform shape},
        at={(0,1)},
        anchor=north west},
    grid=both,
    grid style=dashed,
    xmin=100, xmax=8000,
    xlabel={$n$ number of time steps},
    ylabel={divergence rate (\%)},
    ]
\addplot[black, dashed,mark=square, line width=1pt] table [x=observations, y=fl32_gpu_extended_std_par_ell, col sep=comma]{figures/rtx_fl32_extended_ell.csv};
\addplot[black,mark=square, line width=1pt] table [x=observations, y=fl32_gpu_extended_sqrt_par_ell, col sep=comma]{figures/rtx_fl32_extended_ell.csv};
\legend{std-IEKS, sqrt-IEKS}
\end{axis}
\end{tikzpicture}}}
    \subfloat[Divergence rate of ICKS method\label{fig:ell-ICKS}.]{
        \resizebox{0.45\columnwidth}{!}{\begin{tikzpicture}
\begin{axis}[
    legend style={
        nodes={scale=0.75, transform shape},
        at={(0,1)},
        anchor=north west},
    grid=both,
    grid style=dashed,
    xmin=1e2, xmax=8e3,
    xlabel={$n$ number of time steps},
    ylabel={divergence rate (\%)},
    ]
\addplot[black, dashed,mark=square, line width=1pt] table [x=observations, y=fl32_gpu_cubature_std_par_ell, col sep=comma]{figures/rtx_fl32_cubature_ell.csv};
\addplot[black,mark=square, line width=1pt] table [x=observations, y=fl32_gpu_cubature_sqrt_par_ell, col sep=comma]{figures/rtx_fl32_cubature_ell.csv};
\legend{std-ICKS, sqrt-ICKS}
\end{axis}
\end{tikzpicture}}}
    \caption{Divergence rate comparison, that is, the percentage of each time-step that becomes nan in the $15$ runs using 32-bit floating-point numbers, of the parallel square-root and standard versions of (a) IEKS and (b) ICKS methods.}
    \label{fig:ell}
\end{figure}
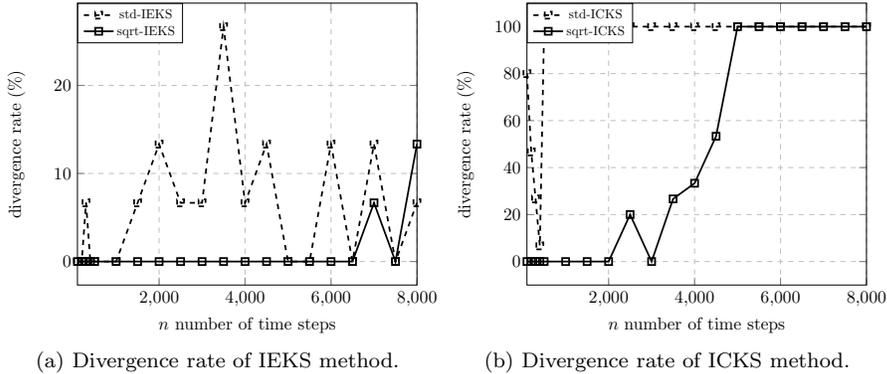
\subsection{Parallel parameter estimation}
The section demonstrates the  performance of the proposed parameter estimation algorithm in terms of accuracy and computational cost reduction using the CT model and a parallel method on GPU. Here, The parameter of interest is the standard deviation of the first sensor, $\sqrt{R_{1,1}}$, with a ground-truth value of $0.05$. The parameters used in this study are the same
parameters used by~\cite{kokkala2015sigma}. In order to assess the performance of the method, we simulate $10$ different trajectories and calibrate the model on varying size subsets of the generated sets of observations. To compute the log-likelihood of the model and the gradient thereof, we use Taylor series and several sigma-point linearization methods (cubature, unscented, and Gauss--Hermite) and the standard version of  iterated smoothers. The parameter is then estimated by maximum likelihood estimation, where the optimization is carried out using the L-BFGS-B algorithm with known (automatically computed) Jacobian. An important benefit of the proposed parallel method is its reduced computing time. To demonstrate this, we provide the GPU run time for various linearization schemes using both sequential and parallel methods. Figure~\ref{fig:time_sdR11} shows that the parallel methods take less time to run than the sequential methods across different linearization schemes.

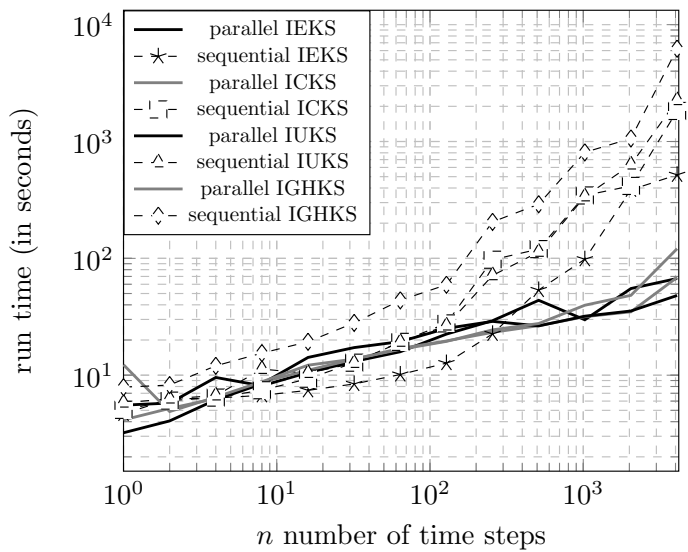
\begin{figure}[tbhp]
    \centering
        \subfloat{
      \resizebox{0.7\columnwidth}{!}{\begin{tikzpicture}
\begin{axis}[
    legend style={
        nodes={scale=0.75, transform shape},
        at={(0,1)},
        anchor=north west},
    grid=both,
    grid style=dashed,
    xmin=1, xmax=4200,
    xlabel={$n$ number of time steps},
    ylabel={run time (in seconds)},
    ymode=log,
    xmode=log
    ]
\addplot[black, line width=1pt] table [x=time_steps, y=time_extended_par, col sep=comma]{figures/GPU_sdR11_runtime.csv};
\addplot[black,dashed, mark=star, mark options={scale=1.5,fill=white}] table [x=time_steps, y=time_extended_seq, col sep=comma]{figures/GPU_sdR11_runtime.csv};
\addplot[gray, line width=1pt] table [x=time_steps, y=time_cubature_par, col sep=comma]{figures/GPU_sdR11_runtime.csv};
\addplot[black,dashed, mark=square*, mark options={scale=1.5,fill=white}] table [x=time_steps, y=time_cubature_seq, col sep=comma]{figures/GPU_sdR11_runtime.csv};
\addplot[black, line width=1pt] table [x=time_steps, y=time_unscented_par, col sep=comma]{figures/GPU_sdR11_runtime.csv};
\addplot[black,dashed, mark=triangle*, mark options={scale=1.5,fill=white}] table [x=time_steps, y=time_unscented_seq, col sep=comma]{figures/GPU_sdR11_runtime.csv};
\addplot[gray, line width=1pt] table [x=time_steps, y=time_gh_par, col sep=comma]{figures/GPU_sdR11_runtime.csv};
\addplot[black,dashed, mark=diamond*, mark options={scale=1.5,fill=white}] table [x=time_steps, y=time_gh_seq, col sep=comma]{figures/GPU_sdR11_runtime.csv};
\legend{parallel IEKS, sequential IEKS, parallel ICKS, sequential ICKS, parallel IUKS, sequential IUKS, parallel IGHKS, sequential IGHKS}
\end{axis}
\end{tikzpicture}}}
    \caption{GPU run time comparison of $\sqrt{R_{1,1}}$ estimate using standard parallel/sequential methods of the iterated extended (IEKS), cubature (ICKS), unscented (IUKS), and Gauss-Hermite (IGHKS) Kalman smoothers.}
    \label{fig:time_sdR11}
\end{figure}

\section{Conclusions}
\label{sec:conclusions}
In this paper, we have presented a novel parallel-in-time inference method for general SSMs. First, to improve the numerical stability of the parallel-in-time Kalman filter and RTS smoother algorithms, we developed square-root versions of the method. Then we considered general SSMs and leverage GSLR. By reformulating the GSLR equations in terms of the square root of covariance matrices, we then obtained a square-root version of our parallel-in-time nonlinear smoother. Finally, using the fixed-point structure of the iterated method, we showed how to compute the log-likelihood and gradient thereof in parallel for use in parameter estimation. Experimental results conducted on a GPU showed that the proposed inference method reduces time complexity and improves numerical stability. 

\section*{Authors contributions}
Original idea by S.S. and S.H.; derivations by F.Y. and S.S.; implementation by A.C. and F.Y.; parameter estimation by A.C.; writing primarily by F.Y.

\bibliographystyle{siamplain}
\bibliography{references}

\begin{thebibliography}{10}

\bibitem{anderson1979optimal}
{\sc B.~D. Anderson and J.~B. Moore}, {\em Optimal Filtering}, Dover
  Publications, Inc, 1979.

\bibitem{arasaratnam2008square}
{\sc I.~Arasaratnam and S.~Haykin}, {\em Square-root quadrature {K}alman
  filtering}, IEEE Transactions on Signal Processing, 56 (2008),
  pp.~2589--2593.

\bibitem{arasaratnam2009cubature}
{\sc I.~Arasaratnam and S.~Haykin}, {\em Cubature {K}alman filters}, IEEE
  Transactions on Automatic Control, 54 (2009), pp.~1254--1269.

\bibitem{arasaratnam2011cubature}
{\sc I.~Arasaratnam and S.~Hayk{i}n}, {\em Cubature {K}alman {S}moothers},
  Automatica, 47 (2011), pp.~2245--2250.

\bibitem{bar2004estimation}
{\sc Y.~Bar-Shalom, X.~R. Li, and T.~Kirubarajan}, {\em Estimation with
  Applications to Tracking and Navigation: Theory, Algorithms and Software},
  Wiley, 2001.

\bibitem{bell1994iterated}
{\sc B.~M. Bell}, {\em The iterated {K}alman smoother as a {G}auss--{N}ewton
  method}, SIAM Journal on Optimization, 4 (1994), pp.~626--636.

\bibitem{Bierman:1977}
{\sc G.~J. Bierman}, {\em Factorization Methods for Discrete Sequential
  Estimation}, Academic Press, 1977.

\bibitem{blelloch1989scans}
{\sc G.~E. Blelloch}, {\em Scans as primitive parallel operations}, IEEE
  Transactions on Computers, 38 (1989), pp.~1526--1538.

\bibitem{jax2018github}
{\sc J.~Bradbury, R.~Frostig, P.~Hawkins, M.~J. Johnson, C.~Leary,
  D.~Maclaurin, and S.~Wanderman-Milne}, {\em {{JAX}: composable
  transformations of {P}ython+{N}um{P}y programs}}.
\newblock \url{http://github.com/google/jax}, 2018.

\bibitem{christianson1994fixedpoint}
{\sc B.~Christianson}, {\em Reverse accumulation and attractive fixed points},
  Optimization Methods \& Software - OPTIM METHOD SOFTW, 3 (1994),
  pp.~311--326, \url{https://doi.org/10.1080/10556789408805572}.

\bibitem{corenflos2021gaussian}
{\sc A.~Corenflos, Z.~Zhao, and S.~S{\"a}rkk{\"a}}, {\em Temporal {G}aussian
  process regression in logarithmic time}, in 2022 IEEE 25th International
  Conference on Information Fusion (FUSION) [forthcoming], IEEE, 2022.

\bibitem{fasiolo2016comparison}
{\sc M.~Fasiolo, N.~Pya, and S.~N. Wood}, {\em A comparison of inferential
  methods for highly nonlinear state space models in ecology and epidemiology},
  Statistical Science,  (2016), pp.~96--118.

\bibitem{garcia2016iterated}
{\sc {\'A}.~F. Garc{\'\i}a-Fern{\'a}ndez, L.~Svensson, and S.~S{\"a}rkk{\"a}},
  {\em Iterated posterior linearization smoother}, IEEE Transactions on
  Automatic Control, 62 (2016), pp.~2056--2063.

\bibitem{gill1974methods}
{\sc P.~E. Gill, G.~H. Golub, W.~Murray, and M.~A. Saunders}, {\em Methods for
  modifying matrix factorizations}, Mathematics of Computation, 28 (1974),
  pp.~505--535.

\bibitem{grewal2001kalman}
{\sc M.~S. Grewal and A.~P. Andrews}, {\em {K}alman filtering: Theory and
  Practice using MATLAB}, John Wiley \& Sons, 2014.

\bibitem{hassan2021temporal}
{\sc S.~Hassan, S.~S{\"a}rkk{\"a}, and {\'A}.~F. Garc{\'\i}a-Fern{\'a}ndez},
  {\em Temporal parallelization of inference in hidden {M}arkov models}, IEEE
  Transactions on Signal Processing, 69 (2021), pp.~4875--4887.

\bibitem{haykin2001kalman}
{\sc S.~Haykin}, {\em Kalman Filtering and Neural Networks}, Wiley, 2001.

\bibitem{jazwinski2007stochastic}
{\sc A.~H. Jazwinski}, {\em Stochastic Processes and Filtering Theory},
  Academic Press, 1970.

\bibitem{jouppi2017datacenter}
{\sc N.~P. Jouppi, C.~Young, N.~Patil, D.~Patterson, G.~Agrawal, R.~Bajwa,
  S.~Bates, S.~Bhatia, N.~Boden, A.~Borchers, et~al.}, {\em In-datacenter
  performance analysis of a tensor processing unit}, in Proceedings of the 44th
  annual international Symposium on Computer Architecture, 2017, pp.~1--12.

\bibitem{julier2000new}
{\sc S.~Julier, J.~Uhlmann, and H.~F. Durrant-Whyte}, {\em A new method for the
  nonlinear transformation of means and covariances in filters and estimators},
  IEEE Transactions on Automatic Control, 45 (2000), pp.~477--482.

\bibitem{julier2004unscented}
{\sc S.~J. Julier and J.~K. Uhlmann}, {\em Unscented filtering and nonlinear
  estimation}, Proceedings of the IEEE, 92 (2004), pp.~401--422.

\bibitem{kalman1960new}
{\sc R.~E. Kalman}, {\em A new approach to linear filtering and prediction
  problems}, Transactions of the ASME, Journal of Basic Engineering, 82 (1960),
  pp.~35--45.

\bibitem{kokkala2015sigma}
{\sc J.~Kokkala, A.~Solin, and S.~S{\"a}rkk{\"a}}, {\em Sigma-point filtering
  and smoothing based parameter estimation in nonlinear dynamic systems},
  Journal of Advances in Information Fusion, 11 (2016), pp.~15--30.

\bibitem{krause2015more}
{\sc O.~Krause and C.~Igel}, {\em A more efficient rank-one covariance matrix
  update for evolution strategies}, in Proceedings of the 2015 ACM Conference
  on Foundations of Genetic Algorithms XIII, 2015, pp.~129--136.

\bibitem{maybeck1982stochastic}
{\sc P.~S. Maybeck}, {\em Stochastic Models, Estimation, and Control}, vol.~2,
  Academic Press, 1982.

\bibitem{rauch1965maximum}
{\sc H.~E. Rauch, F.~Tung, and C.~T. Striebel}, {\em Maximum likelihood
  estimates of linear dynamic systems}, AIAA journal, 3 (1965), pp.~1445--1450.

\bibitem{roth2014ekf}
{\sc M.~Roth, G.~Hendeby, and F.~Gustafsson}, {\em {EKF/UKF} maneuvering target
  tracking using coordinated turn models with polar/{C}artesian velocity}, in
  17th International Conference on Information Fusion (FUSION), IEEE, 2014,
  pp.~1--8.

\bibitem{sarkka2008unscented}
{\sc S.~S{\"a}rkk{\"a}}, {\em Unscented {R}auch--{T}ung--{S}triebel smoother},
  IEEE Transactions on Automatic Control, 53 (2008), pp.~845--849.

\bibitem{sarkka2013bayesian}
{\sc S.~S{\"a}rkk{\"a}}, {\em Bayesian Filtering and Smoothing}, Cambridge
  University Press, 2013.

\bibitem{sarkka2020temporal}
{\sc S.~S{\"a}rkk{\"a} and {\'A}.~F. Garc{\'i}a-Fern\'andez}, {\em Temporal
  parallelization of {B}ayesian smoothers}, IEEE Transactions on Automatic
  Control, 66 (2021), pp.~299--306.

\bibitem{sarkka2020levenberg}
{\sc S.~S{\"a}rkk{\"a} and L.~Svensson}, {\em {L}evenberg-{M}arquardt and
  line-search extended {K}alman smoothers}, in ICASSP 2020-2020 IEEE
  International Conference on Acoustics, Speech and Signal Processing (ICASSP),
  IEEE, 2020, pp.~5875--5879.

\bibitem{tronarp2018iterative}
{\sc F.~Tronarp, {\'A}.~F. Garc{\'i}a-Fern\'andez, and S.~S{\"a}rkk{\"a}}, {\em
  Iterative filtering and smoothing in nonlinear and non-{G}aussian systems
  using conditional moments}, IEEE Signal Processing Letters, 25 (2018),
  pp.~408--412.

\bibitem{yaghoobi2021parallel}
{\sc F.~Yaghoobi, A.~Corenflos, S.~Hassan, and S.~S{\"a}rkk{\"a}}, {\em
  Parallel iterated extended and sigma-point {K}alman smoothers}, in ICASSP
  2021-2021 IEEE International Conference on Acoustics, Speech and Signal
  Processing (ICASSP), IEEE, 2021, pp.~5350--5354.

\end{thebibliography}
\end{document}